\newcommand\ttrue{\ensuremath{\mathtt{true}}}
\newcommand\ffalse{\ensuremath{\mathtt{false}}}
\newcommand\B{\ensuremath{\mathtt{b}}}
\newcommand\E{\ensuremath{\mathtt{e}}}
\newcommand\C{\ensuremath{\mathtt{c}}}
\newcommand\D{\ensuremath{\mathtt{d}}}
\newcommand\zero{\ensuremath{\mathtt{0}}}
\newcommand\one{\ensuremath{\mathtt{1}}}
\newcommand\notmodels{\,\not\!\models}
\newcommand\pmlg{\textsc{pmlg}\xspace}
\newcommand\sat{\textsc{sat}\xspace}
\newcommand\weth{\textsc{eth}\xspace}
\newcommand\seth{\textsc{seth}\xspace}
\newcommand{\poly}{\operatorname{poly}}
\title{On the Complexity of Exact Pattern Matching in Graphs: Binary Strings and Bounded Degree}
\titlerunning{On the Complexity of Exact Pattern Matching in Graphs}
\author{Massimo Equi}{Department of Computer Science, University of Helsinki, Finland}{massimo.equi@helsinki.fi}{}{}
\author{Roberto Grossi}{Dipartimento di Informatica, Universit\`a di Pisa, Italy}{grossi@di.unipi.it}{}{}
\author{Veli M\"{a}kinen}{Department of Computer Science, University of Helsinki, Finland}{veli.makinen@helsinki.fi}{}{}
\authorrunning{M. Equi, R. Grossi, V. M\"{a}kinen}
\subjclass{
\ccsdesc[500]{Mathematics of computing~Graph algorithms}~
\ccsdesc[500]{Theory of computation~Problems, reductions and completeness}~
\ccsdesc[500]{Theory of computation~Pattern matching}
}
\keywords{
exact pattern matching, graph query, graph search, heterogeneous networks, labeled graphs, string matching, string search, strong exponential time hypothesis, variation graphs
}
\begin{document}

\maketitle

\begin{abstract}
Exact pattern matching in labeled graphs is the problem of searching paths of a graph $G=(V,E)$ that spell the same string as the pattern $P[1..m]$. This basic problem can be found at the heart of more complex operations on variation graphs in computational biology, of query operations in graph databases, and of analysis operations in heterogeneous networks, where the nodes of some paths must match a sequence of labels or types. We describe a simple conditional lower bound that, for any constant $\epsilon>0$, an $O(|E|^{1 - \epsilon} \, m)$-time or an $O(|E| \, m^{1 - \epsilon})$-time algorithm for exact pattern matching on graphs, with node labels and patterns drawn from a binary alphabet, cannot be achieved unless the Strong Exponential Time Hypothesis (\seth) is false. The result holds even if restricted to undirected graphs of maximum degree three or directed acyclic graphs of maximum sum of indegree and outdegree three. Although a conditional lower bound of this kind can be somehow derived from previous results (Backurs and Indyk, FOCS'16), we give a direct reduction from \seth for dissemination purposes, as the result might interest researchers from several areas, such as computational biology, graph database, and graph mining, as mentioned before.
Indeed, as approximate pattern matching on graphs can be solved in $O(|E|\,m)$ time, exact and approximate matching are thus equally hard (quadratic time) on graphs under the \seth assumption. In comparison, the same problems restricted to strings have linear time vs quadratic time solutions, respectively, where the latter ones have a matching \seth lower bound on computing the edit distance of two strings (Backurs and Indyk, STOC'15).
\end{abstract}

\section{Introduction}
\label{sec:intro}

Large-scale labeled graphs are becoming ubiquitous in several areas, such as computational biology, graph databases, and graph mining. Applications require sophisticated operations on these graphs, and often rely on primitives that locate paths whose nodes have labels or types matching a pattern given at query time.

In graph databases, query languages provide the user with the ability to select paths based on the labels of their nodes or edges, where the edge labels are called properties. In this way, graph databases explicitly lay out the dependencies between the nodes of data, whereas these dependencies are implicit in classical relational databases~\cite{AnglesGutierrez2008}.
Although a standard query language has not been yet universally adopted (as it occurred for SQL in relational databases), popular query languages such as Cypher~\cite{FrancisGGLLMPRS18}, Gremlin~\cite{Rodriguez15}, and SPARQL~\cite{sparqlquery} offer the possibility of specifying paths by matching the labels of their nodes.

In graph mining and machine learning for network analysis, heterogeneous networks specify the type of each node~\cite{ShiLZSY17}. For example, in the DBLP network~\cite{YangL12}, the nodes for authors can be marked with letter 'A', and the nodes for papers can be marked with letter 'P', where edges connect authors to their papers. For example, coauthors can be identified by the pattern 'APA' when it matches two different nodes with 'A'. The strings generated by the labels on the paths have several applications in heterogeneous networks, such as graph kernels~\cite{HidoK09} or node similarity~\cite{ConteFGMSU18}, where  a basic tool is retrieving the paths for a string.

In genome research, the very first step of many standard analysis pipelines of high-throughput sequencing data has been to align the sequenced fragments of DNA (called reads) on a reference genome of a species. Further analysis reveals a set of positions where the sequenced individual differs from the reference genome. After years of these kind of studies, there is now a growing dataset of frequently observed differences between individuals and the reference. A natural representation of this gained knowledge is a \emph{variation graph} where the reference sequence is the backbone and variations are encoded as alternative paths \cite{Sch09}. Aligning reads (pattern matching) on this labeled graph gives the basis for the new paradigm called \emph{computational pan-genomics} \cite{Maretal18}. There are already tools that use such ideas, e.g.~\cite{vg}.

Although there is a growing need to perform pattern matching on graphs in several situations described above, the idea of extending the problem of string searching in sequences to pattern matching in graphs was studied over 25 years ago as a search problem in \emph{hypertext} \cite{manber1992approximate}. The history of key contributions is given in Table~\ref{table:summary}, where the two best known results for exact and approximate pattern matching, both taking quadratic time in the worst case, are highlighted. Note that errors in the graphs makes the problem NP-hard~\cite{AmirLL00}, so we consider errors in the pattern only.

\definecolor{mylightgray}{gray}{0.93}

\begin{table}[t]
\centering
\begin{tabular}{|c|c|c|c|c|}
     \hline
     \multicolumn{5}{|c|}{State of the art for \emph{PMLG}}\\
     \hline\hline
     \textbf{Year} & \textbf{Authors} & \textbf{Graph} & \textbf{Exact/} & \textbf{Time}\\
     ~ & ~ & ~ & \textbf{Approximate} & ~\\
     \hline
     1992 & Manber, Wu \cite{manber1992approximate} & DAG & approximate$^\textit{(1)}$ & $O(m|E| + occ\lg\lg m)$\\
     \hline
     1993 & Akutsu \cite{akutsu1993linear} & Tree & exact\phantom{$^\textit{(3)}$} & $O(N)$\\
    \hline
    1995 & Park, Kim \cite{park1995string} & DAG & exact$^\textit{(3)}$ & $O(N + m|E|)$\\
    \hline
    \rowcolor{mylightgray}
    1997 & Amir et al. \cite{AmirLL00} & general & exact\phantom{$^\textit{(3)}$} & $O(N + m|E|)$\\
    \hline
    1997 & Amir et al. \cite{AmirLL00} & general & approximate$^\textit{(2)}$ & NP-Hard\\
    \hline
    1997 & Amir et al. \cite{AmirLL00} & general & approximate$^\textit{(1)}$ & $O(Nm\lg N + m|E|)$\\
    \hline
    1998 & Navarro \cite{navarro2000improved} & general & approximate$^\textit{(1)}$ & $O(Nm + m|E|)$\\
    \hline
    2017 & Vadaddi et al.~\cite{Vaddadi2017sequence} & general & approximate$^\textit{(1)}$ & $O((|V|+1)m|E|)$\\
    \hline
    \rowcolor{mylightgray}
    2017 & Rautiainen, Marschall \cite{rautiainen2017aligning} & general & approximate$^\textit{(1)}$ & $O(N + m|E|)$\\
    \hline
    2019 & Jain et al. \cite{JZGA19} & general & approximate$^\textit{(2)}$ & NP-Hard on binary alphabet\\
    \hline
\end{tabular}
\caption{Legend: $V$ = set of nodes, $E$ = set of edges, $occ$ =  number of matches for the pattern in the graph, $m$ = pattern   length, $N$ = total length of text in all nodes, \textit{(1)} errors only in the pattern, \textit{(2)} errors in the graph, \textit{(3)} matches span only one edge. The two rows highlighted in gray report the best known bounds for exact and approximate pattern matching.}
\label{table:summary}
\end{table}

A common feature of the bounds reported in Table~\ref{table:summary} is the appearance of the quadratic term $m \, |E|$ (except for the special cases of trees and the general NP-hard approximate version). Here $m$ is the length of the pattern string and $E$ is the set of edges of the graph. 
The quadratic cost of the approximate matching on graphs by Rautiainen and Marschall~\cite{rautiainen2017aligning} are asymptotical optimal under the Strong Exponential Time Hypothesis~\cite{IP01} (\seth) as (i)~they solve the approximate string matching as a special case, since a graph consisting of just one path of $|E|+1$ nodes and $|E|$ edges is a text string of length $n=|E|+1$, and (ii)~it has been recently proved that the edit distance of two strings of length $n$ cannot be computed in $O(n^{2-\epsilon})$ time, for any constant $\epsilon>0$, unless \seth is false~\cite{BI15}. Hence this conditional lower bound explains why the $O(m|E|)$ barrier has been difficult to cross. 

We can only explain the complexity on \emph{approximate} pattern matching on graphs, but nothing is known on \emph{exact} pattern matching on graphs. Indeed, the classical exact pattern matching with a pattern and a text string can be solved in linear time \cite{KMP77}, so one could expect the corresponding problem on graphs to be easier than approximate pattern matching. 

In this paper we end up with a slightly surprising observation that \emph{exact and approximate pattern matching are equally hard on graphs}. Namely, we show the conditional lower bound that an $O(|E|^{1 - \epsilon} \, m)$-time or an $O(|E| \, m^{1 - \epsilon})$-time algorithm for exact pattern matching on graphs cannot be achieved unless \seth is false. This result explains why it has been difficult to find indexing schemes for graphs with other than best case or average case guarantees for fast exact pattern matching \cite{SVM14,GMS17}.

Before going on to give the overview and details of the reduction, let us now fix the problem definition and \seth formulation.

\begin{definition}\label{definition:labeledgraph}
    Given an alphabet $\Sigma$, a labeled graph $G$ is a triplet $(V,E,L)$ where $(V,E)$ is a directed or undirected graph and $L: V \rightarrow \Sigma^+$ is a function that defines which string over $\Sigma$ is assigned to each node.
\end{definition}

\begin{definition}
    Let $u_1, \ldots, u_j$ be a path in graph $G$ and $P$ be a pattern. Also, $L(u)[l:]$ and $L(u)[:l']$ denote the suffix of $L(u)$ starting at position $l$ and the prefix of $L(u)$ ending at position $l'$, respectively. We say that $u_1, \ldots, u_j$ is a match for $P$ in $G$ with offset $l$ if the concatenation of the strings $L(u_1)[l:] \cdot L(u_2) \cdot \ldots \cdot L(u_{j-1}) \cdot L(u_j)[:l']$ equals $P$, for some $l'$.
\end{definition}

The  \emph{Pattern Matching in Labeled Graphs} (\emph{\pmlg}) problem is then defined as:
\begin{description}
\item{\textsc{input}:} a labeled graph $G = (V,E,L)$ and a pattern $P$ over an alphabet $\Sigma$.
\item{\textsc{output}:} all the matches for $P$ in $G$.
\end{description}

For example, in Fig.~\ref{figure:G_Fgadget} pattern \C\D\E\ has two occurrences but pattern \B\C\C\C\E\ does not occur. For our purpose it would be enough to exploit a decision version of the problem, namely, to be able to determine whether or not there exists at least one match for $P$ in $G$, without reporting all of them. Note that the matching path for $P$ can go through the same nodes multiples times in $G$ as otherwise \pmlg is trying to solve the NP-hard Hamiltonian path problem.

We now recall what is \seth, namely, the Strong Exponential Time Hypothesis \cite{IP01}. This is a conjecture which is commonly used as a basis of reductions in the scientific community, even though its weaker version \weth is more widely accepted.
\begin{definition}[\cite{IP01}]
    Let \emph{q-\sat} be an instance of \sat  with at most \emph{q} literals per clause. 
    Given $\delta_q = \inf \, \{\alpha \, : \, \text{there is an} \, O(2^{\alpha n})\text{-time algorithm for\ } q\text{-\sat} \}$, 
    \seth claims that $\lim\limits_{q \to \infty} \delta_q = 1$.
\end{definition}

In other words, it is hard to find an $O(2^{\alpha n})$-time algorithm for general \sat for a constant $\alpha < 1$. We use \seth in the the following result,
given that the best known algorithm for \pmlg, devised 20 years ago~\cite{AmirLL00}, has an $O(|E| \, m)$ time complexity.
\begin{theorem}
\label{theorem:Emlowerbound} 
For any constant $\epsilon > 0$, the Pattern Matching in Labeled Graphs (\pmlg) problem for an alphabet of at least $4$ symbols cannot be solved in either $O(|E|^{1-\epsilon} \, m)$ or $O(|E| \, m^{1-\epsilon})$ time unless \seth is false.
\end{theorem}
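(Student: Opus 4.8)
The plan is to give a direct reduction from $q$-\sat{} to \pmlg{} (it can equivalently be routed through the Orthogonal Vectors problem). From a $q$-CNF formula $\phi$ on $n$ variables, which after deleting repeated clauses has $c=O(n^{q})$ clauses, I construct a labeled graph $G$ and a pattern $P$ over the four-letter alphabet $\{\zero,\one,\B,\E\}$ with $|E|=O(Nc)$ and $m:=|P|=O(Nc)$, where $N=O(2^{n/2})$, such that $G$ contains a path matching $P$ if and only if $\phi$ is satisfiable. As $|E|$ and $m$ are then both $O(2^{n/2}\poly(n))$, an $O(|E|^{1-\epsilon}m)$- or $O(|E|\,m^{1-\epsilon})$-time \pmlg{} algorithm would decide $q$-\sat{} in $O(2^{(1-\epsilon/2)n}\poly(n))=O(2^{(1-\epsilon/3)n})$ time, hence $\delta_q\le 1-\epsilon/3$ for every $q$, contradicting $\lim_{q\to\infty}\delta_q=1$, i.e. \seth.

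The combinatorial heart is the standard split-and-list step. I partition the variables into halves $X_1,X_2$ of sizes $\lceil n/2\rceil$ and $\lfloor n/2\rfloor$, and to every partial assignment $\alpha$ of $X_1$ I associate its \emph{clause profile} $s_\alpha\in\{0,1\}^{c}$, with $s_\alpha[j]=1$ iff $\alpha$ already satisfies clause $C_j$ through one of its $X_1$-literals; analogously $t_\beta\in\{0,1\}^{c}$ for assignments $\beta$ of $X_2$. Then $\phi$ is satisfiable iff some pair $(\alpha,\beta)$ is \emph{compatible}, i.e. $s_\alpha[j]=1$ or $t_\beta[j]=1$ for every $j$, so it suffices to detect one compatible pair among the at most $N$ profiles on each side.

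I encode one side's profiles in $P$ and the other side's in $G$. The pattern is a concatenation of blocks, $P=\B\,e(t_{\beta_1})\,\E\,\B\,e(t_{\beta_2})\,\E\cdots\B\,e(t_{\beta_N})\,\E$, where $e(t)\in\{\zero,\one\}^{c}$ writes $\one$ at exactly the positions where $t$ equals $1$ (and $\zero$ elsewhere), and $\B,\E$ are delimiters absent from every block. The graph is a source, a constant-size \emph{free gadget} $D$ --- a small strongly connected component over $\{\zero,\one,\B,\E\}$ that, once entered, can spell any non-empty string over the alphabet, and that may also be skipped --- a \emph{selector} branching into $N$ parallel \emph{chains}, one per profile $s_\alpha$, a second free gadget $D'$ receiving the ends of all chains, and a sink. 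The $\alpha$-chain is a series of $c$ coordinate gadgets between a $\B$-node and an $\E$-node: coordinate $j$ is a two-node ``diamond'' labeled $\zero$ and $\one$ when $s_\alpha[j]=1$ (so it may spell either symbol), and a single node labeled $\one$ when $s_\alpha[j]=0$ (so it is forced). Hence the $\alpha$-chain spells exactly the strings $\B\,w\,\E$ with $w\in\{\zero,\one\}^{c}$ and $w[j]=\one$ whenever $s_\alpha[j]=0$, so it can spell $\B\,e(t_{\beta_j})\,\E$ precisely when $(\alpha,\beta_j)$ is compatible. Any source-to-sink path spells $u\cdot\B\,w\,\E\cdot u'$ with $u,u'$ arbitrary (from $D,D'$) and $|w|=c$; since $\B\,w\,\E$ has no inner delimiter, spelling all of $P$ forces $\B\,w\,\E$ to coincide with a complete block $\B\,e(t_{\beta_j})\,\E$ while $D$ and $D'$ spell the remaining blocks. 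Thus $G$ matches $P$ iff some pair is compatible, and a straightforward count gives $|E|=O(Nc)$ and $m=O(Nc)$.

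The main obstacle is precisely this feature of \pmlg{}: a match must spell the \emph{entire} pattern, so a naive encoding would impose a condition quantified universally over the blocks of $P$, whereas we want an existential one; the free gadgets $D,D'$ carry out this conversion by wildcarding all blocks but one. Making the argument airtight calls for (i) showing no source-to-sink path can skip the selector or traverse a chain only partially, so the middle factor is genuinely some $\B\,w\,\E$ with $w\in\{\zero,\one\}^{c}$; (ii) a delimiter-counting argument pinning $\B\,w\,\E$ to a block boundary of $P$; and (iii) the boundary cases where the matched block is the first or the last, so that one free gadget must produce the empty string --- which is why the matching definition is allowed to truncate the first and last node labels and why the path may start or end at a delimiter node. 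The remaining ingredients --- the profile/compatibility equivalence and the exponent arithmetic --- are routine.
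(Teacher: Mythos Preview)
The reduction idea and the chain/compatibility encoding are sound and essentially the same as the paper's $G_F$ (your diamond-vs-single-node per coordinate is exactly the paper's $c_{j,h}/d_{j,h}$ row), but the proposal has a genuine gap in the ``free gadget'' step that makes the reduction incorrect as stated.

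You describe $D$ (and $D'$) as ``a small strongly connected component over $\{\zero,\one,\B,\E\}$ that, once entered, can spell any non-empty string over the alphabet.'' In \pmlg{} as defined here, a match is \emph{any} walk in $G$ whose label concatenation equals $P$; there is no requirement that the walk begin at your source or end at your sink. Consequently, a walk that stays entirely inside $D$ already spells every string over $\{\zero,\one,\B,\E\}$, in particular all of $P$, so your instance always has a match regardless of whether $\phi$ is satisfiable. Your items (i)--(iii) argue only about source-to-sink paths, which is the wrong quantifier; the problematic direction is when $\phi$ is unsatisfiable but $P$ still matches inside $D$. Adding unique start/end markers to pin the walk to source and sink would fix this but pushes the alphabet to six symbols, not four.

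The paper handles exactly this issue differently: instead of an unbounded ``wildcard'' component, it uses two \emph{finite chains} $G_U^{(1)}$ and $G_U^{(2)}$, each a concatenation of $2^{n/2}-1$ universal block-gadgets joined by bridge edges whose endpoints are labeled $\E,\B$. Because $P$ contains the bigram $\E\B$ but never $\B\E$, every occurrence of $\E\B$ in $P$ forces the walk to cross one of these bridges in the forward direction and never back; hence each $G_U$ can absorb at most $2^{n/2}-1$ blocks of $P$. Since $P$ has $2^{n/2}$ blocks, at least one block is forced to match inside the selector gadget $G_F$. This bounded-capacity absorber plus the $\E\B$/$\B\E$ asymmetry is the missing idea in your construction; if you replace $D$ and $D'$ by such chains of length $2^{n/2}-1$ (with $\B$/$\E$ delimiters as bridges), your argument goes through with the same size bounds.
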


We can further strengthen the statement of this theorem by proving the following corollaries.

\begin{corollary}
\label{cor:general-undirected}
The conditional lower bound stated in Theorem \ref{theorem:Emlowerbound} holds even if it is restricted to graphs with binary alphabet for the labels, where each node has degree at most three.
\end{corollary}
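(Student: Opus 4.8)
The plan is to obtain the restricted instances from those produced by the reduction underlying Theorem~\ref{theorem:Emlowerbound} (which has a constant-size alphabet and in which the pattern and the graph are polynomially related in size) by two polynomial-time instance transformations: one that caps every degree at three at the cost of a single auxiliary letter, and one that then collapses the alphabet to $\{\zero,\one\}$. Each transformation preserves the presence of a match and blows up $|E|$ and $m$ by at most a $\poly(\log|E|)$ factor; since $\poly(\log|E|)=O(|E|^{\delta})=O(m^{\delta})$ for every $\delta>0$, a hypothetical $O(|E|^{1-\epsilon}\,m)$-time (resp.\ $O(|E|\,m^{1-\epsilon})$-time) algorithm for the restricted class would solve the original instances in $O(|E|^{1-\epsilon/2}\,m)$-time (resp.\ $O(|E|\,m^{1-\epsilon/2})$-time), contradicting Theorem~\ref{theorem:Emlowerbound} and hence \seth.

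\emph{Binarization.} Fix an enumeration $\sigma_1,\dots,\sigma_s$ of the (constantly many) letters in use and encode $\sigma_j$ by the word $\one^{j}\zero^{\,s+1-j}$. Replace the label of each node by the codeword of its letter --- Definition~\ref{definition:labeledgraph} permits string labels, so this rewrites labels without adding nodes or edges and leaves all degrees intact --- and replace $P$ by the concatenation of the codewords of its letters. Every codeword ends with at least one $\zero$ and begins with a non-empty maximal block of $\one$'s whose length identifies the letter, so in any concatenation of codewords the maximal blocks of $\one$'s are, read left to right, exactly the per-codeword prefixes, and none of them straddles a codeword boundary. A short block-length comparison then shows that an occurrence of the binary pattern in the binary graph can neither begin nor end in the interior of a codeword; hence it projects to an occurrence of $P$ in $G$, and conversely. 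The pattern length and the total text length grow by the constant factor $s{+}1$.

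\emph{Degree capping.} If the construction behind Theorem~\ref{theorem:Emlowerbound} is not already of bounded degree, replace each node $v$ of degree $d$ by a gadget $H_v$: a balanced binary tree whose $d$ leaves carry the former incident edges of $v$, whose root $c_v$ keeps the label $L(v)$, and all of whose other nodes carry a fresh sentinel letter $\star$; extend each tree by a short $\star$-path above its root so that all root-to-leaf paths have one common length $D=\Theta(\log\Delta)$, with the hidden constant chosen large (see below). Every node now has degree at most three. A walk visiting $u_1,\dots,u_j$ in $G$ corresponds to a walk in the new graph visiting $c_{u_1},\dots,c_{u_j}$ with exactly $2D$ sentinel nodes between consecutive $c$'s, so we replace $P=p_1\cdots p_m$ by $p_1\,\star^{2D}\,p_2\,\star^{2D}\cdots\star^{2D}\,p_m$; the directed acyclic variant works verbatim, orienting each tree away from $c_v$ on the ``out'' copy and toward $c_v$ on the ``in'' copy so that acyclicity and the in-plus-out-degree bound of three are both preserved. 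Since $\star$ is disjoint from the original alphabet and each tree has $\Theta(d)$ nodes, this costs a constant factor in $|E|$ and a $\Theta(D)=\poly(\log|E|)$ factor in $m$; applying \emph{Binarization} afterwards to this bounded-degree, $(s{+}1)$-letter instance preserves the degree bound, as it only rewrites labels.

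\emph{The main obstacle} is the faithfulness of the degree-capping step, namely that the dilated pattern occurs in the new graph exactly when $P$ occurs in $G$. The forward direction is the construction. For the converse, the non-sentinel letters of the dilated pattern must be matched at gadget roots, so a matching walk visits a sequence $c_{w_1},c_{w_2},\dots$ of roots with $L(w_i)=p_i$; the crux is to show that a walk can spell exactly $\star^{2D}$ between two consecutive roots only by descending straight through one gadget, crossing a single former edge, and ascending straight through the adjacent gadget --- every alternative (bouncing inside a tree, or entering and leaving an intermediate gadget) incurs at least $2(D-\log\Delta)-O(1)$ extra sentinel steps, which overshoots $2D$ once $D$ is a large enough multiple of $\log\Delta$. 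Hence consecutive $w_i$'s are adjacent in $G$ and the walk projects to an occurrence of $P$. If the construction behind Theorem~\ref{theorem:Emlowerbound} is already of bounded degree, the degree-capping step and its analysis are dropped and only \emph{Binarization} is applied.
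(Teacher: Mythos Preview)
Your overall route is valid but genuinely different from the paper's. The paper does not use a black-box transformation: it exploits the specific structure of the reduction graph. High degree occurs only at the two nodes $b,e$ of $G_F$, and those are replaced by binary trees of \D-labeled dummy nodes (plus local pairs of \D-nodes to drop the degree of the $c_{j,h},d_{j,h}$ from four to three); the pattern is padded with matching blocks of \D's. For the binary alphabet the paper uses the \emph{variable-length} code $\C\mapsto\zero\zero\zero\zero$, $\D\mapsto\one\one\one\one$, $\B\mapsto\one\zero$, $\E\mapsto\zero\one$ and proves that $\zero\one\one\zero$ occurs in a binary string iff $\E\B$ occurs in its preimage, while $\one\zero\zero\one$ never occurs in the encoded pattern; correctness then hinges on the bridge structure of the graph and the fact that $P'$ contains $\E\B$ but never $\B\E$. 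Your fixed-length code $\sigma_j\mapsto\one^{j}\zero^{\,s+1-j}$ with the maximal-$\one$-block synchronization argument is correct and more self-contained; that part is fine.

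Your degree-capping step has a flawed justification, though the construction itself works. The claim that any alternative walk ``incurs at least $2(D-\log\Delta)-O(1)$ extra sentinel steps'' is false: entering an intermediate gadget $H_u$ at one bottom leaf and leaving at another costs only $O(\log\Delta)$ steps inside $H_u$ (the two leaves meet at their lowest common ancestor inside the binary-tree part; one never needs to approach $c_u$), plus two inter-gadget edges. What actually forces adjacency is a simple edge count: any walk from $c_{w_i}$ to $c_{w_{i+1}}$ through $\star$-nodes decomposes into a first segment in $H_{w_i}$ from $c_{w_i}$ to a bottom leaf (length $\ge D$), a last segment in $H_{w_{i+1}}$ from a bottom leaf to $c_{w_{i+1}}$ (length $\ge D$), intermediate segments of nonnegative length, and one inter-gadget edge between consecutive segments. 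Since the total length is exactly $2D{+}1$, there is room for at most one inter-gadget edge, hence no intermediate gadget at all, and the single crossing witnesses $(w_i,w_{i+1})\in E(G)$. This argument needs only $D\ge\lceil\log\Delta\rceil$, not a large multiple of $\log\Delta$. A minor related slip: each $H_v$ has $\Theta(d_v+D)$ nodes, so the blow-up in $|E|$ is $\Theta(|E|+|V|\log\Delta)$, a polylogarithmic rather than constant factor; this is still harmless for the \seth conclusion.
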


\begin{corollary}
\label{cor:general-dag}
The conditional lower bound stated in Theorem \ref{theorem:Emlowerbound} holds even if it is restricted to labeled directed acyclic graphs (DAGs) with binary alphabet for the labels, where each node has the sum of indegree and outdegree at most three.
\end{corollary}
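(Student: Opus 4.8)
The plan is to take the instance $(G,P)$ produced by the reduction behind Theorem~\ref{theorem:Emlowerbound} — which we may assume is a directed acyclic graph whose nodes carry single-character labels over $\Sigma=\{\B,\C,\D,\E\}$, since any longer label can first be replaced by a directed path of single-character nodes without changing the set of spelled strings, the acyclicity, or, up to a constant factor, the size — and to apply two transformations: a binarisation of the alphabet and a degree reduction. Because we only need the decision version of \pmlg, it suffices that each transformation preserves the predicate ``some path spells the pattern'' while keeping $|E|$ and $m$ within a logarithmic factor of their previous values; since $\log|E|$ is smaller than any fixed positive power of $\min(|E|,m)$ for the instance sizes arising in that reduction, such factors are absorbed by replacing $\epsilon$ with $\epsilon/2$, and the conditional lower bound transfers. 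The same two transformations, with edge orientations kept or forgotten, are what I would use for Corollary~\ref{cor:general-undirected} as well; for the present statement the extra points are to check that they keep the graph oriented and acyclic and that they yield a bound on \emph{indegree plus outdegree}.

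\emph{Binarisation.} Fix the injective code $\phi$ with $\phi(\B)=\one\zero\zero\one$, $\phi(\C)=\one\zero\one\one$, $\phi(\D)=\one\one\zero\one$, $\phi(\E)=\one\one\one\one$; every codeword has length $4$, begins and ends with \one, and contains at most two consecutive \zero's. Replace each single-character node $v$ with label $c$ by a directed path of four single-bit nodes spelling $\phi(c)$, the first node inheriting the in-edges of $v$ and the last inheriting the out-edges of $v$. This keeps the graph a DAG, makes every label a single bit, and a path of the new graph spells exactly the $\phi$-image of what the corresponding path of $G$ spelled; the pattern becomes $\phi(P)$ and the size grows by a constant factor.

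\emph{Degree reduction.} Now, for every node with in-degree $p\ge1$, attach a \emph{padded binary in-tree}: a binary tree with $p$ leaves whose edges are oriented towards its root (a directed path when $p=1$), each leaf extended by a directed path so that all root-to-leaf lengths equal a common value $D:=\max(2,\lceil\log_2\Delta\rceil)$, where $\Delta$ is the maximum degree after binarisation; add an edge from the in-tree root to the node and re-route the $i$-th former in-edge into the $i$-th leaf. Do the symmetric construction with padded binary out-trees of the same depth $D$ for every node with out-degree $\ge1$. Label all tree nodes \zero. Then every tree-internal node has (indegree, outdegree) equal to $(2,1)$ or $(1,2)$, every leaf has $(1,1)$, and every original bit-node now has one of $(1,1),(1,2),(2,1)$, so the maximum of indegree $+$ outdegree is three; placing the gadget of each node at that node's slot in a topological order of $G$ shows the result is still acyclic. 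By construction every original edge has become a \emph{seam} traversed by exactly $2D$ \zero-labelled nodes, and every super-node still spells its codeword, so a path threading the gadgets along an original path $u_1,\dots,u_k$ spells $\phi(L(u_1))\,\zero^{2D}\,\phi(L(u_2))\cdots\zero^{2D}\,\phi(L(u_k))$.

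\emph{Correctness, size, and the obstacle.} Set $P':=\phi(c_1)\,\zero^{2D}\,\phi(c_2)\cdots\zero^{2D}\,\phi(c_k)$ for $P=c_1\cdots c_k$. Since $2D>2$, a run of $2D$ \zero's cannot occur inside a codeword, so in any path-spelling of the final graph the $k-1$ runs $\zero^{2D}$ of $P'$ must land on $k-1$ consecutive seams, the length-$4$ blocks of $P'$ between them must agree bit-for-bit with the codewords of the $k$ super-nodes traversed, and, comparing the lengths of the blocks before the first seam and after the last, the match can neither start nor stop in the middle of a codeword or inside a tree; injectivity of $\phi$ then forces $L(u_i)=c_i$. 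Hence $P'$ has a match in the final binary DAG iff $P$ has a match in $G$. The new graph has $O(|E|\log|E|)$ edges and $P'$ has length $O(m\log|E|)$, so an $O(|E|^{1-\epsilon}m)$- or $O(|E|\,m^{1-\epsilon})$-time algorithm on this class of instances would, after replacing $\epsilon$ by $\epsilon/2$, contradict Theorem~\ref{theorem:Emlowerbound} and hence \seth. The step that needs the most care is the degree reduction: routing arbitrarily large fan-in and fan-out through constant-degree gadgets using \emph{only} the binary alphabet threatens to create spurious matches, and the device that prevents this is forcing every gadget to emit a run of exactly $2D$ \zero's — a single trick that simultaneously reduces the degree, preserves acyclicity, and supplies the block-boundary synchronisation that a bare binary block code would not, at the cost of only the logarithmic blow-ups just mentioned.
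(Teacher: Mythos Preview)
Your argument is correct, but it follows a genuinely different route from the paper's. The paper proceeds in the opposite order and leans heavily on the specific shape of the reduction graph: it first reduces the degree of $G_F$ and $G_U$ by hand (binary trees at the two hub nodes $b,e$, plus local pairs of dummy $\D$-nodes wherever the inner ``grid'' has degree four), and only then applies a \emph{variable-length} binary code $\alpha(\C)=\zero\zero\zero\zero$, $\alpha(\D)=\one\one\one\one$, $\alpha(\B)=\one\zero$, $\alpha(\E)=\zero\one$. The synchronisation is supplied not by long runs of zeros but by the bridges: every $\E$--$\B$ edge in $G$ is a bridge, the pattern contains $\E\B$ but never $\B\E$, and the encoding is chosen so that $\alpha(x)$ contains $\zero\one\one\zero$ iff $x$ contains $\E\B$ (their Lemma~\ref{lemma:eb-property}); hence matches of $\alpha(P')$ cannot desynchronise. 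Finally, the DAG claim is obtained by simply orienting every edge ``left to right'' in the picture, which the bridge structure makes consistent.

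Your construction is more modular: a fixed-length code whose words begin and end with $\one$, followed by a generic in-tree/out-tree degree reduction whose all-$\zero$ labels give every former edge a seam of exactly $2D$ zeros, so the seams themselves act as separators. This would work for \emph{any} starting DAG over a constant alphabet, not just the particular gadget graph; the price is an extra $O(\log|E|)$ factor on both $|E|$ and $m$, which, as you note, is harmless here. The paper's approach avoids that factor but is tied to the concrete gadgets.

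One small gap to close: you write that the instance from Theorem~\ref{theorem:Emlowerbound} ``may be assumed'' to be a DAG, but that theorem is proved for an \emph{undirected} graph; the clause after ``since'' justifies single-character labels, not acyclicity. You should add one sentence (as the paper does) observing that in the concrete construction all matching paths traverse each $\E$--$\B$ bridge in the same direction, so the edges can be consistently oriented into a DAG before your two transformations are applied.
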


In order to achieve these results we break down our reasoning process in
some intermediate steps. Since this is a conditional lower bound we will reduce \sat to \pmlg. Then we will show that having a truly subquadratic algorithm for \pmlg would cause to solve \sat in $O(2^{\alpha n})$ time with $\alpha < 1$. Our reduction costs $\Tilde{O}(2^{\frac n2})$ time for a \sat formula with $n$ variables and $k = O(\poly(n))$ clauses, where $\Tilde{O}$ is the shorthand for ignoring polynomial factors in $n$, e.g. $O(k n^2 \, 2^\frac n2) = \Tilde{O}(2^\frac n2)$. Hence the main steps can be synthesized as follows.
\begin{itemize}
\item Find a reduction from \sat to \pmlg.
\item Ensure that this reduction costs $\Tilde{O}(2^{\frac n2})$ time.
\item Show that having a $O(|E|^{1 - \epsilon} \, m)$ or a $O(|E| \, m^{1 - \epsilon})$ time algorithm for \pmlg gives a solution for \sat that makes \seth fail.
\end{itemize}

Our reduction shares some similarities with those for string problems in \cite{BI15,BK15,ABW15,BI16,BZ17} as it uses \seth. The closest connection is with \cite{BI16}, where regular expression matching is studied (graph $G_F$ in Section~\ref{sub:gadgets} is analogous to the NFA derived from the regular expression matching of type $\mid\cdot\mid$ in \cite{BI16}). At presentation level, the difference to earlier work is that we reduce directly from \seth, while the earlier work uses an intermediate problem, orthogonal vectors, as a tool; our reduction can also be presented via the orthogonal vectors problem, but we preferred to work with \seth directly since \sat is more familiar to researcher from various research areas. On a more conceptual level, the new reduction has some interesting features of independent interest. Given a \sat formula, our reduction builds a pattern and a graph, using some special characters in the pattern to match bridges in the graph that can be traversed in one direction only (even if the graph is undirected). Also, obtaining the reduction for a binary alphabet requires a suitable variable-length encoding of the characters to avoid certain paths in the graph.

An earlier version of this reduction can be found in the Master's thesis of the first author \cite{Equi18} (supervised by the two last authors).

\section{Conditional lower bound for PMLG on undirected graphs}
\label{sec:SAT-graphs}

Consider a \sat formula $F$ with variables $v_1, \ldots,  v_n$ and set $C$ of $k$ clauses.\footnote{
In this paper we discuss the interesting case where $k = O(\poly(n))$.
} 
We show how to generate a corresponding instance of \pmlg. We build a pattern $P \in \Sigma^m$ of suitable length $m = \Tilde{O}(2^{\frac n2})$ and a labeled graph $G = (V,E, L)$, where $|E| = \Tilde{O}(2^{\frac n2})$ and $L : V \rightarrow \Sigma^*$ is the node labeling with strings from $\Sigma^*$, such that $P$ matches in $G$ if and only if $F$ is satisfied by some truth assignment of its variables.
Recall that a truth assignment $x$ is a tuple $\langle b_1, \ldots,  b_n \rangle$, where $b_i \in \{\ttrue, \ffalse\}$ is the truth value assigned to each variable $v_i$. We write $x \models c$ to indicate that there exists at least one literal $\ell \in c$ satisfied by $x$ (i.e. either $\ell = v_i$ and $b_i = \ttrue$, or $\ell = \neg v_i$ and $b_i = \ffalse$).

Our reduction builds a pattern with $m = \tilde{O}(2^{\frac n2})$ symbols from a binary alphabet $\Sigma$ along with an undirected graph whose nodes are labeled with single symbols from $\Sigma$ (i.e. $L : V \rightarrow \Sigma$). This graph has $|V|, |E| =\tilde{O}(2^{\frac n2})$ nodes and edges, and maximum degree three. The reduction can be modified so that the graph is directed with maximum sum of indegree and outdegree at least three.

For presentation's sake, we begin with a pattern $P$ using an alphabet of four symbols, $\Sigma = \{\B,\E,\C,\D\}$, whose interpretation is to label nodes according to their implicit functionality: {\B}egin (synchronization token), {\E}nd (synchronization token), {\C}lause (marker), {\D}ummy (don't care); moreover, the resulting undirected graph $G$ has unbounded degree; after that, we will show how to get the minimal degree configuration for $G$ and how to achieve a binary alphabet, as depicted above.

We assume that $n$ is an even number, without loss of generality, and denote by $X$ the set of $2^{\frac n2}$ possible assignments for the first $n/2$ variables, and by $Y$ those for the last $n/2$ variables, that is, 
\begin{align*}
X =& \; \{x_i \, \mid \; x_i = \langle b_1^{(i)}, \ldots,  b_\frac{n}{2}^{(i)}\rangle  \text{\ is a truth assignment for\ } v_1, \ldots,  v_{\frac{n}{2}} \}\text{ and}&  \\  
Y =& \; \{y_j \, \mid \; y_j = \langle b_{\frac{n}{2}+1}^{(j)}, \ldots,  b_n^{(j)}\rangle \text{\ is a truth assignment for\ }  v_{\frac{n}{2}+1}, \ldots,  v_n \}.&
\end{align*}

We call elements of $X$ and $Y$ \emph{half-assignments} and interpret notation $\models$ accordingly. For example, $y_j \models c$ if and only if there is a literal $\ell \in c$ satisfied by the half-assignment $y_j$ (i.e. either $\ell = v_i$ and $b_i^{(j)} = \ttrue$, or $\ell = \neg v_i$ and $b_i^{(j)} = \ffalse$, for some $i \geq \frac n2 +1$).

The reduction components to follow will be interpreted as follows. The pattern encodes by position, placing a symbol \C\ to indicate which clauses \emph{cannot be satisfied} by a half-assignment $x_i \in X$; the other clauses are marked by \D\ as they are already satisfied by $x_i$ alone; symbols \B\ and \E\ are employed to sync the half-assignments from $X$ with portions of the graph, called \emph{gadgets}. 

The gadgets encode which clauses \emph{are satisfied} by the half-assignments of $Y$, encoding each such clause with a distinct node labeled with \C: when a symbol \C\ in the pattern matches a node with label \C\ in the graph, the corresponding clause is now covered by a half-assignment $y_j \in Y$, while it was not yet covered by half-assignment $x_i \in X$.
If all the symbols \C\ for $x_i$ are matched by the nodes of the gadget corresponding to $y_j$, then assignment $x_i y_j$ satisfyes the \sat formula $F$; also the other direction holds. 

Parallel nodes labeled with \D\ are introduced to deal with the cases when the pattern indicates that the corresponding clause is already satisfied by a half-assignment in $X$. Nodes labeled with \B\ or \E\ are used to match a half-assignment $x_i \in X$ with a half-assignment of $y_j \in Y$. Details follow below.

\subsection{Building the pattern}
\label{sub:pattern}

Pattern $P$ is defined over the alphabet $\Sigma = \{ \B,\E,\C,\D \}$ using the half-assignments in $X = \{ x_1, \ldots, x_{2^{\frac n2}}\}$ and the set $C = \{c_1, \dots, c_k\}$ of clauses of \sat formula $F$. Specifically, it is built as the concatenation $P = \E\B P_{x_1}\E\,\B P_{x_2}\E \ldots \B P_{x_{2^{\frac{n}{2}}}}\E\B$ of $2^{\frac n2}$ strings where $x_i \in X$ and, for $1 \leq h \leq k$, the $h$th symbol of string $P_{x_i}$ is defined as
\begin{equation*}
P_{x_i}[h] =
    \begin{cases}
        \C \qquad  \text{if} \quad x_i  \notmodels c_h\\
        \D \qquad  \text{otherwise} 
    \end{cases}
\end{equation*}
We will prove that $F$ is satisfiable if and only if we can find a match for this pattern in our graph, where the latter is made up of gadgets as specified below.

\begin{figure}[t]
\centering
\includegraphics[scale=0.35]{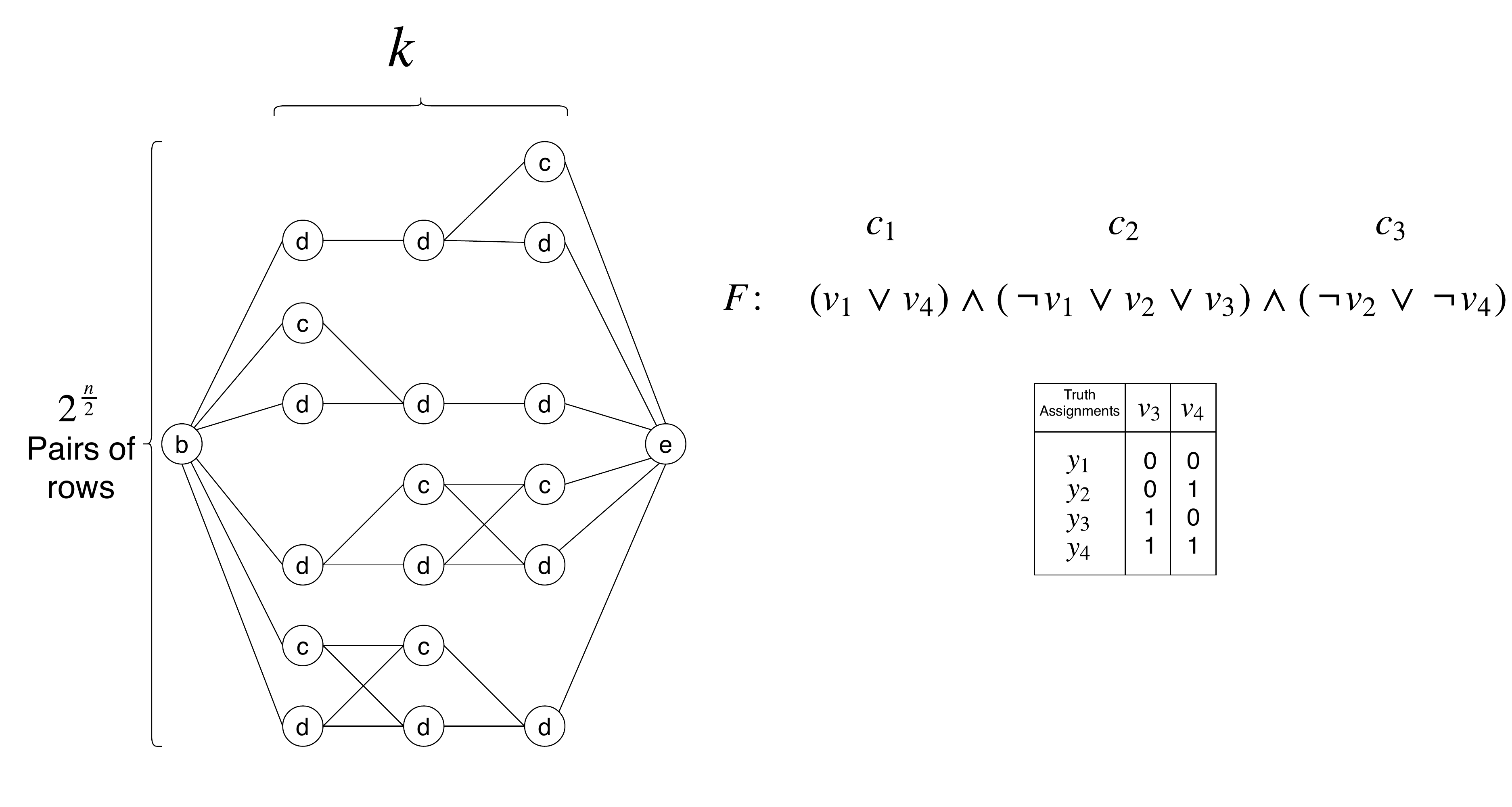}
\caption{Gadget $G_F$ for formula $F$, where $n = 4, 2^\frac n2 = 4$ and $k = 3$. If for instance we look at the first column we observe that $c_{1,1}$ and $c_{3,1}$ are missing meaning that $y_1 \notmodels c_1$ and $y_3 \notmodels c_1$. On the other hand we know that $y_2 \models c_1$ and $y_4 \models c_1$ since we have nodes $c_{2,1}$ and $c_{4,1}$. An example of a pattern that we are able to match is $P = \B \C \C \D \E$ while we would fail on $\bar{P} = \B \C \D \C \E$.}
\label{figure:G_Fgadget}
\end{figure}

\subsection{Graph gadgets for SAT formulas}
\label{sub:gadgets}

Our gadget is an undirected graph $G_F = (V_F,E_F,L_F)$, illustrated in \figurename~\ref{figure:G_Fgadget} and defined as follows using the $2^{\frac n2}$ half-assignments in $Y = \{ y_1, \ldots, y_{2^{\frac n2}}\}$ and the set $C = \{c_1, \dots, c_k\}$ of clauses of \sat formula $F$. 

In the set $V_F$ of nodes, we have a clause node $c_{j,h}$  for every possible pair $y_j,c_h \in Y \times C$ such that  $y_j \models c_h$, and a dummy node $d_{j,h}$ for every possible pair $y_j,c_h \in Y \times C$. Set $V_F$ also contains two special nodes, a begin node $b$ and an end node $e$, 
\[
V_F = \{ c_{j,h} \, \mid \, y_j \models c_h, \; y_j \in Y, c_h \in C \, \} \cup \{ d_{j,h} \, \mid \,  y_j \in Y, c_h \in C \, \} \cup \{ b, e\}
\]

\noindent
Labeling  $L_F: V_F \rightarrow \Sigma$ is consequently defined, where a symbol \C\ in the pattern that matches a node labeled with \C\ in the graph will represent the fact a clause \emph{not} satisfied by a certain half-assignment in $X$ is \emph{actually} satisfied by a certain half-assignment in $Y$. The \D\ symbols are sort of ``don't care'', and \B\ and \E\ symbols synchronize the whole.
\begin{equation*}
L_F(u) = 
    \begin{cases}
        \B \qquad \text{if} \quad u = b\\
        \E \qquad \text{if} \quad u = e\\
        \C \qquad \text{if} \quad u = c_{j,h}\\
        \D \qquad \text{if} \quad u = d_{j,h}\\
    \end{cases}
\end{equation*}

As shown in Fig.~\ref{figure:G_Fgadget}, the edges in the set $E_F$ connect $b$ to every $c_{h,1}$ and $d_{h,1}$, and connect every $c_{h,k}$ and $d_{h,k}$ to $e$, for $1 \leq h \leq k$. Moreover, there is an edge for every pair of nodes that share the same $j$ and are consecutive in terms of $h$ coordinate (e.g. $c_{j,h}, d_{j,h+1}$), for $1 \leq j \leq 2^{\frac n2}$ and $1 \leq h \leq k$. 
\begin{align*}
E_F = &\{ \left(b, c_{j,1}\right) \mid c_{j,1} \in V \} \cup \{ \left(b, d_{j,1}\right) \mid d_{j,1} \in V \} \, \cup 
\{ \left(c_{j,k}, e\right) \mid c_{j,k} \in V \} \cup \{ \left(d_{j,k}, e\right) \mid d_{j,k} \in V \}  \\
&\cup \, \{ \left(c_{j,h}, c_{j,h+1}\right) \mid c_{j,h}, c_{j,h+1} \in V \} \, \cup 
\{ \left(c_{j,h}, d_{j,h+1}\right) \mid c_{j,h}, d_{j,h+1} \in V \} \\
&\cup \, \{ \left(d_{j,h}, c_{j,h+1}\right) \mid d_{j,h}, c_{j,h+1} \in V \} \, \cup 
\{ \left(d_{j,h}, d_{j,h+1}\right) \mid d_{j,h}, d_{j,h+1} \in V\}
\end{align*}

We observe that pattern occurrences in $G_F$ have some combinatorial properties.\footnote{Gadget $G_F$ is analogous to the main component of the \seth reduction to regular expression matching of type $\mid\cdot\mid$ in \cite{BI16}.} 

\begin{lemma} 
\label{lemma:samej}
If subpattern $\B P_{x_i} \E$ matches in $G_F$ then all the nodes matching $P_{x_i}$ share the same $j$ coordinate and have distinct and consecutive $h$ coordinates (i.e. either $c_{j,h}$ or $d_{j,h}$ for $1 \leq h \leq k$).
\end{lemma}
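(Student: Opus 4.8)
The structure of $G_F$ makes the key constraint clear: the only nodes labeled $\B$ are copies of $b$, and the only nodes labeled $\E$ are copies of $e$ — actually there is exactly one $b$ and one $e$. So if $\B P_{x_i} \E$ matches in $G_F$, the first symbol $\B$ must match the node $b$, the last symbol $\E$ must match the node $e$, and the $k$ internal symbols $P_{x_i}[1..k]$ must match a path of $k$ nodes from $V_F \setminus \{b,e\}$, i.e.\ a path of nodes of the form $c_{j,h}$ or $d_{j,h}$. The plan is to show that any such path is forced to be "horizontal": it visits, for a fixed $j$, the nodes in column $1$, then column $2$, and so on up to column $k$, with no repetitions.

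First I would formalize the edge structure: every edge of $E_F$ is either incident to $b$ (connecting $b$ to a node with $h=1$), incident to $e$ (connecting $e$ to a node with $h=k$), or connects a node with coordinate $h$ to a node with coordinate $h+1$ sharing the same $j$. In particular, $G_F$ restricted to $V_F \setminus \{b,e\}$ is a disjoint union over $j$ of paths, and the edge relation strictly increases (or strictly decreases) the $h$-coordinate by exactly $1$ while preserving $j$. Now consider the walk $w_1, \dots, w_k$ in $G_F$ matching $P_{x_i}[1..k]$, preceded by $b$ and followed by $e$. Since $b$'s only neighbors have $h=1$, we get $h(w_1)=1$; symmetrically $h(w_k)=k$; and $w_1,\dots,w_k$ all lie in $V_F\setminus\{b,e\}$ (they cannot be $b$ or $e$ since their labels are $\C$ or $\D$, whereas $b,e$ are labeled $\B,\E$).

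The core step is an induction on the position in the subpattern. Between $b$ and $w_1$ there is an edge, so $h(w_1) = 1$ and $j(w_1)$ is some value, call it $j$. For each consecutive pair $w_t, w_{t+1}$ there is an edge in $E_F$; such an edge must have the two endpoints share the same $j$-coordinate and differ by exactly one in $h$. So at each step the $h$-coordinate changes by $\pm 1$ and the $j$-coordinate is invariant — hence all $w_t$ have $j$-coordinate equal to $j$. Since $h(w_1)=1$ and $h(w_k)=k$ and each of the $k-1$ steps changes $h$ by at most $1$ in absolute value, the only way to go from $1$ to $k$ in $k-1$ steps is to increment every time: $h(w_t) = t$ for all $t$. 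This gives exactly the claim: the matching nodes all share the $j$-coordinate and their $h$-coordinates are $1, 2, \dots, k$, distinct and consecutive, with each $w_t \in \{c_{j,t}, d_{j,t}\}$.

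I do not expect a serious obstacle here; the only point requiring a little care is ruling out that the matching walk leaves $V_F \setminus \{b,e\}$ in the middle — i.e.\ that some internal $w_t$ equals $b$ or $e$ — which is immediate from the labels, since $P_{x_i}$ consists solely of $\C$ and $\D$ symbols while $L_F(b)=\B$ and $L_F(e)=\E$. A secondary small check is that the pattern symbols $\B$ and $\E$ flanking $P_{x_i}$ genuinely force $w_0 = b$ and $w_{k+1} = e$ rather than some other node; again this is forced because $b$ and $e$ are the unique nodes with those labels. With these observations the "distinct and consecutive $h$, common $j$" conclusion follows by the counting argument above.
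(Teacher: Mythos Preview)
Your proposal is correct and follows essentially the same approach as the paper: both use the uniqueness of the nodes labeled $\B$ and $\E$, the fact that interior edges preserve the $j$-coordinate and shift $h$ by exactly one, and a length/shortest-path counting argument to force $h(w_t)=t$. Your version makes the $\pm 1$ step argument a bit more explicit, whereas the paper phrases it as ``the matching path has exactly $k+2$ nodes and the shortest $b$--$e$ path has $k+2$ nodes, so it must be a shortest path,'' but these are the same idea.
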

\begin{proof}
Gadget $G_F$ contains a single node $b$ with label $L(b) = \B$ and a single node $e$ with label $L(e) = \E$. Morever, the shortest path from $b$ to $e$ contains $k+2$ nodes ($b$ and $e$ included). As $\B P_{x_i} \E$ contains $k+2$ symbols, its matching path $\pi = b,u_1, \ldots, u_k,e$ in $G_F$ must traverse all distinct nodes by construction. Suppose by contradiction that at least one node in $\pi$ has different $j$ coordinate. This means that two consecutive nodes $u_h$ and $u_{h+1}$ in $\pi$ have coordinates $j$ and $j'$, with $j \neq j'$. Node $u_h$ is actually either $c_{j,h}$ or $d_{j,h}$, whereas 
$u_{h+1}$ is either $c_{j',h+1}$ or $d_{j',h+1}$. By inspection of these four possible cases, we observe that our construction of $G_F$ does not provide any edge connecting $u_h$ and $u_{h+1}$.
Indeed, there is no edge that allows a node to change the $j$ coordinate in the middle of a path. Hence we reach a contradiction. Finally, if one of the matching nodes were not consecutive in terms of $h$ coordinate, by construction we know that we would not be following the shortest path to $e_W^{(j)}$ hence it would not be possible to complete the match.
\end{proof}

\begin{lemma} 
\label{lemma:matchingGF}
Subpattern $\B P_{x_i}\E$ matches in $G_F$ if and only if there is $y_j \in Y$ such that the truth assignment $x_i y_j$ satisfies $F$ (i.e. $x_i y_j \models F$).
\end{lemma}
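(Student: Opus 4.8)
The plan is to prove the biconditional directly, handling each direction separately and leaning on Lemma~\ref{lemma:samej} for the forward direction.

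For the forward direction, suppose $\B P_{x_i}\E$ matches in $G_F$. By Lemma~\ref{lemma:samej}, the matching path has the form $\pi = b, u_1, \ldots, u_k, e$ where each $u_h$ is either $c_{j,h}$ or $d_{j,h}$ for one fixed $j$; let $y_j \in Y$ be the corresponding half-assignment. I claim $x_i y_j \models F$, i.e. that every clause $c_h$ is satisfied by $x_i y_j$. Fix $h$. If $x_i \models c_h$ then $c_h$ is already satisfied and we are done. Otherwise $x_i \notmodels c_h$, so by the definition of $P_{x_i}$ we have $P_{x_i}[h] = \C$; since $u_h$ matches this symbol, $L_F(u_h) = \C$, so $u_h = c_{j,h}$, which by construction of $V_F$ exists only when $y_j \models c_h$. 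Hence $c_h$ is satisfied by $y_j$, and therefore by $x_i y_j$. Since $h$ was arbitrary, $x_i y_j \models F$.

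For the backward direction, suppose there is $y_j \in Y$ with $x_i y_j \models F$. I construct a matching path explicitly. Take the path $\pi = b, u_1, \ldots, u_k, e$ where, for each $h$, we set $u_h = d_{j,h}$ if $x_i \models c_h$ and $u_h = c_{j,h}$ if $x_i \notmodels c_h$. I must check two things: (i) this is a valid path in $G_F$, and (ii) its label spells $\B P_{x_i}\E$. For (i): all of $u_1, \ldots, u_k$ share coordinate $j$ and have consecutive $h$ coordinates, so the edges $(u_h, u_{h+1})$ are present by construction (the edge set contains all four combinations of $c/d$ endpoints for consecutive $h$), and the edges $(b, u_1)$ and $(u_k, e)$ are present as well; the only subtlety is that $u_h = c_{j,h}$ requires the node $c_{j,h}$ to exist in $V_F$, which holds precisely when $y_j \models c_h$ — and indeed, when $x_i \notmodels c_h$, the hypothesis $x_i y_j \models F$ forces $y_j \models c_h$, so the node exists. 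For (ii): by definition $L_F(u_h) = \D$ exactly when $u_h = d_{j,h}$, i.e. when $x_i \models c_h$, which is exactly when $P_{x_i}[h] = \D$; symmetrically $L_F(u_h) = \C$ exactly when $P_{x_i}[h] = \C$; and $L_F(b) = \B$, $L_F(e) = \E$. So the label of $\pi$ is $\B P_{x_i}\E$, completing the match.

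The main obstacle is making sure the backward construction is actually well-defined as a path in $G_F$ — specifically that whenever we are forced to use a clause node $c_{j,h}$ (because $x_i \notmodels c_h$), that node genuinely exists in $V_F$. This is exactly where the hypothesis $x_i y_j \models F$ is used in full strength: it is not enough that $x_i y_j$ satisfies each clause via \emph{some} literal; we need that the clauses not covered by $x_i$ are covered by $y_j$, which is precisely the condition $y_j \models c_h$ that guarantees $c_{j,h} \in V_F$. Everything else is a routine check against the definitions of $V_F$, $E_F$, $L_F$, and $P_{x_i}$.
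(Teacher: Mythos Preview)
Your proof is correct and follows essentially the same approach as the paper: both directions invoke Lemma~\ref{lemma:samej} to fix the row coordinate~$j$, then argue clause by clause that a \C\ in position~$h$ of~$P_{x_i}$ forces (forward) or requires (backward) the existence of node~$c_{j,h}$, which in turn is equivalent to $y_j \models c_h$. Your write-up is slightly more explicit than the paper's in verifying that the constructed path in the backward direction is actually a path in~$G_F$ (edge presence and node existence), but the underlying argument is identical.
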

\begin{proof}
By Lemma~\ref{lemma:samej}, we can focus on the $k$ distinct nodes matching $P_{x_i}$, sharing the same coordinate $j$. We handle the two implications of the statement individually.

($\Rightarrow$) Consider the partial assignment $x_i \in X$. From the structure of the pattern we know that $x_i$ satisfies all the clauses $c_h$ for which $P_{x_i}[h] = \D$. Since $P_{x_i}$ has a match in $G_F$, consider the assignment $y_j \in Y$ where $j$ exists by Lemma~\ref{lemma:samej}, as observed above. We observe that by construction $y_j$ satisfies those clauses that $x_i$ cannot satisfy, namely those for which $P_{x_i}[h] = \C$. Hence we have found a truth assignment $x_i y_j$ that satisfies $F$.
    
($\Leftarrow$) Consider a truth assignment $x_i y_j$ that satisfies $F$, that is, all clauses $c_h$ for \sat formula $F$ are true. Consider now the nodes with coordinate $j$ in $G_F$. For $h = 1, 2, \ldots, k$, if $x_i \models c_h$ then $P_{x_i}[h] = \D$ and matching node $d_{j,h}$ exits in $G_F$ by its definition. If $x_i \notmodels c_h$ then it must be $y_j \models c_h$: thus $P_{x_i}[h] = \C$ and a matching node $c_{j,h}$ exists in $G_F$ by its construction.
The definition of the edges of $G_F$ ensures that all the above nodes $c_{j,h}$ and $d_{j,h}$, as we need, are properly linked to form a path of distinct nodes (for increasing values of $h$); it is so because they all share the same $j$ coordinate. This implies that $P_{x_i}$ matches in $G_F$.  
\end{proof}

\begin{figure}[t]
\centering
\includegraphics[scale=0.40]{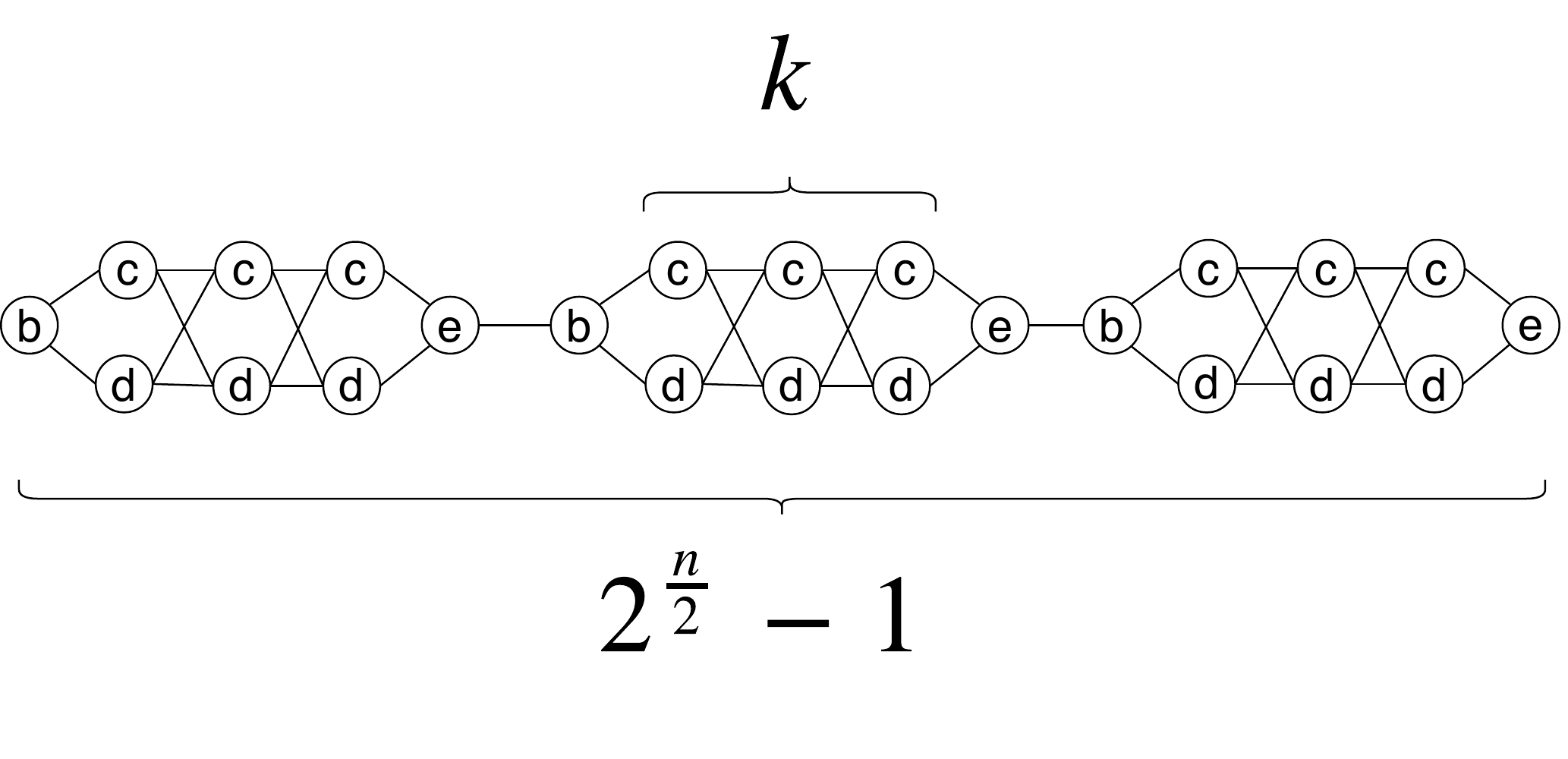}
\caption{In this example $n = 4, 2^\frac n2 = 4$ and $k = 3$. This graph can match any sequence of patterns $P_{x_i}$ but its length is limited to $2^\frac n2 - 1$.}
\label{figure:G_Ugadget}
\end{figure}

While the previous gadget is useful to check whether a half-assignment $x_i$ satisfies $F$ using a \emph{given} subpattern $P_{x_i} \in \B \{ \C, \D \}^k \E$, we need another ``jolly'' gadget that matches \emph{all} subpatterns in $\B \{ \C, \D \}^k \E$ (this is useful when $x_i$ does not satisfy $F$).
We concatenate $2^{\frac{n}{2}}-1$ instances of the latter gadget, thus obtaining the graph $G_U = G(V_U,E_U,L_U)$ illustrated in \figurename~\ref{figure:G_Ugadget}, whose definition is clear from the picture. The $j$th copy of the gadget substructure has a node $b_j$ followed by nodes $c_{j,h}$, $d_{j,h}$ and then node $e_j$, with $1 \leq j \leq 2^{\frac{n}{2}}-1$ and $1 \leq h \leq k$. The labels are $L_U(b_j) = \B$, $L_U(c_{j,h}) = \C$, $L_U(d_{j,h}) = \D$ and $L_U(e_j) = \E$ (we may think about nodes $c_{i,h}$ and $d_{i,h}$ as disposed along two parallel lines). We place the edges $(b_i, c_{i,1}), \, (b_i, d_{i,1}), \, (c_{i,k}, e_j), \, (d_{i,k}, e_i)$ for connecting the beginning and ending nodes of each gadget with its inner part.
We connect nodes $c_{i,h}$ and $d_{i,h}$ with the edges $(c_{i,h}, c_{i,h+1}), \, (c_{i,h}, d_{i,h+1}), \, (d_{i,h}, c_{i,h+1}), \, (d_{i,h}, d_{i,h+1})$. We concatenate our gadgets one after the other using the edges $(e_i, b_{i+1})$, for $i = 1, \ldots, 2^{\frac{n}{2}}-1$.

\begin{figure}[t]
\centering
\includegraphics[scale=0.35]{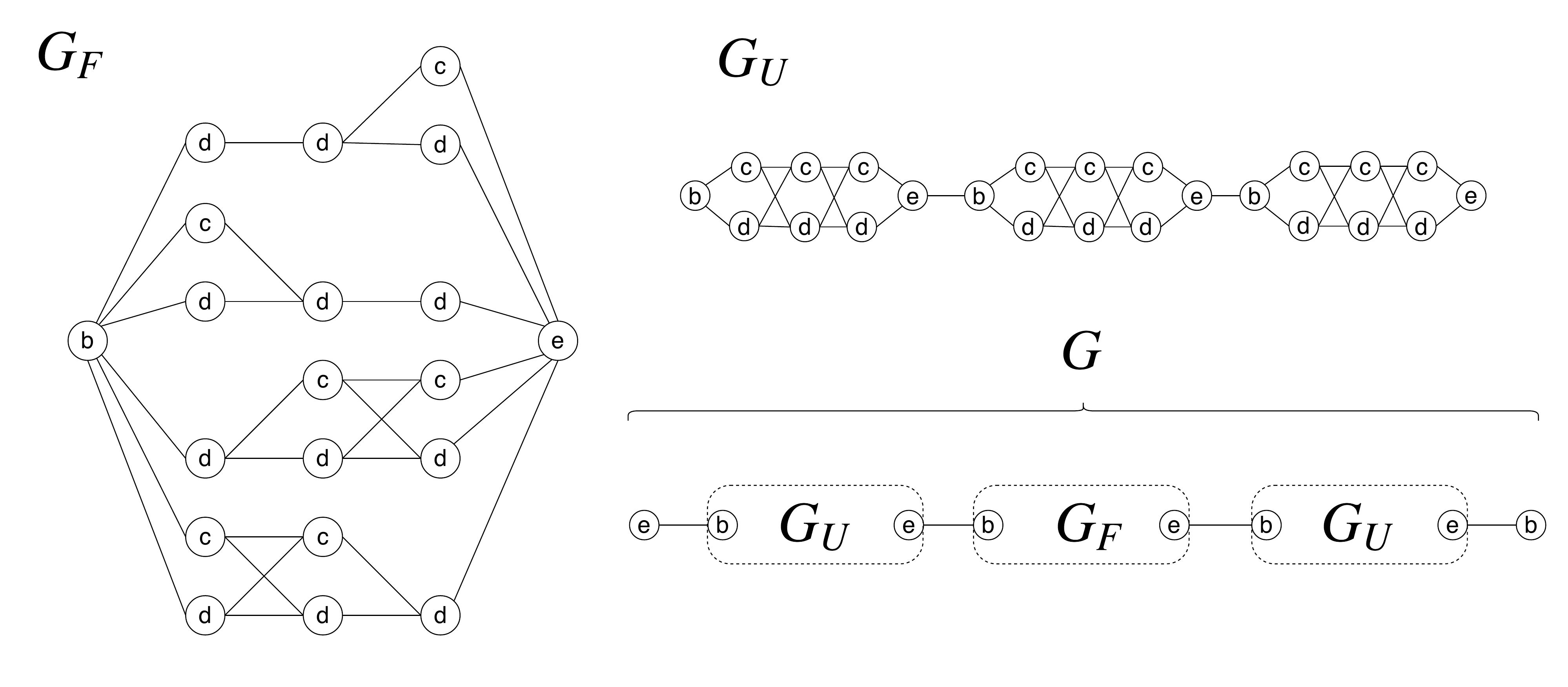}
\caption{This figure shows how to construct $G$ in the example we proposed before, where $n = 4, 2^\frac n2 = 4$ and $k = 3$. What we need to introduce to obtain the final graph $G$ are the two solid edges that are connecting the two instances of $G_U$ with $G_F$ in the bottom part of the figure.}
\label{figure:CompleteGraph}
\end{figure}

\subsection{Putting all together}
\label{sub:all-together}

Armed with gadgets $G_F$ and $G_U$, we obtain the graph $G = (V,E,L)$ from the \sat formula~$F$ by combining them as illustrated in \figurename~\ref{figure:CompleteGraph}. We take one instance of $G_F = (V_F, E_F, L_F)$ and two  instances of $G_U$, say $G_U^{(1)} = (V_U^{(1)}, E_U^{(1)}, L_U^{(1)})$ and $G_U^{(2)} = (V_U^{(2)}, E_U^{(2)}, L_U^{(2)})$, and two new nodes $u$ and $z$, where their label is respectively \E\ and \B. Then $G = (V,E,L)$ has node set $V = V_F \cup \{u,z\} \cup  V_U^{(1)} \cup V_U^{(2)}$, preserving the node labels. The edge set is the union of the previous edge sets plus four edges:  one connects the ``last'' node labeled with \E\ in $G_U^{(1)}$ with the node labeled with \B\ in $G_F$; the other connects the node labeled with \E\ in $G_F$ with the ``first'' node labeled with \B\ in $G_U^{(2)}$, plus $u$ is connected to the first node labeled with \B\ in $G_U^{(1)}$, and the last node labeled with \E\  in $G_U^{(2)}$ is connected to $z$.

\begin{remark}
Each edge connecting a node labeled with \E\ to a node labeled with \B\ is a \emph{bridge} in $G$ (i.e. its removal disconnect $G$). As we shall see, the purpose of these bridges is dual since, within a matching path, the $i$th occurrence of \E\B\ in the pattern matches the $i$th bridge with labels \E\ and \B\ at its endpoints: (i)~they synchronize the distinct subpatterns with the distinct (portions of the) gadgets, and (ii)~they guarantee that the pattern matches a path of distinct nodes rather than a walk.
\end{remark}

We now prove that the reduction is correct, first focusing on subpatterns of $P$.

\begin{lemma} \label{lemma:patternsubpattern}
Pattern $P$ matches in $G$ if and only if a subpattern $\B P_{x_i}\E$ of $P$ matches in $G_F$.
\end{lemma}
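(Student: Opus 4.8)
The plan is to prove the two implications separately, leaning on the role of the bridges emphasised in the Remark and on the elementary behaviour of a single $G_U$-copy and of $G_F$ (for the latter, Lemma~\ref{lemma:samej} and its proof).

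For the direction ``some $\B P_{x_i}\E$ matches in $G_F$ $\Rightarrow$ $P$ matches in $G$'' I would argue by explicit construction. Since $G_F$ has a unique $\B$-node $b$ and a unique $\E$-node $e$ (as noted in the proof of Lemma~\ref{lemma:samej}), the assumed match is a walk in $G_F$ from $b$ to $e$. I then splice a matching walk for all of $P=\E\B P_{x_1}\E\B\cdots\B P_{x_{2^{n/2}}}\E\B$ as follows: route $P_{x_1},\dots,P_{x_{i-1}}$ through the last $i-1$ copies of $G_U^{(1)}$, route $P_{x_i}$ through $G_F$ using the hypothesis, and route $P_{x_{i+1}},\dots,P_{x_{2^{n/2}}}$ through the first $2^{n/2}-i$ copies of $G_U^{(2)}$. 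Here I use that one $G_U$-copy matches \emph{every} string in $\B\{\C,\D\}^k\E$ via the simple path $b_t,z_1,\dots,z_k,e_t$ with $z_h\in\{c_{t,h},d_{t,h}\}$ chosen by the corresponding symbol. Consecutive blocks are joined by the $\E\B$ factors of $P$, each matching the bridge between consecutive copies/gadgets, while the leading $\E$ and trailing $\B$ of $P$ land on suitable $\E$- and $\B$-nodes; in the boundary cases $i=2^{n/2}$ and $i=1$ all copies of one $G_U$ are consumed by blocks, which is exactly why $u$ (labelled $\E$) and $z$ (labelled $\B$) were added: they absorb the leading $\E$, resp.\ the trailing $\B$. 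A short count confirms the walk uses $2^{n/2}+2$ consecutive ``pieces'' of $G$ and spells $P$ exactly.

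For the direction ``$P$ matches in $G$ $\Rightarrow$ some $\B P_{x_i}\E$ matches in $G_F$'' I would first record two facts: (a)~in $G$, a $\B$-labelled node and an $\E$-labelled node are adjacent only along a bridge; (b)~$P$ contains the factor $\E\B$ exactly $2^{n/2}+1$ times and never contains the factor $\B\E$. From (a) and (b), in any matching walk for $P$ each bridge is crossed at most once, always from its $\E$-endpoint towards its $\B$-endpoint (the reverse crossing needs a $\B\E$ factor, and since the edge is a bridge one cannot get back to the $\E$-side otherwise); moreover each of the $2^{n/2}+1$ occurrences of $\E\B$ in $P$ forces exactly one crossing, so there are exactly $2^{n/2}+1$ crossings. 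Deleting every bridge splits $G$ into a path of $2^{n/2+1}+1$ pieces — the $2^{n/2}-1$ bodies of $G_U^{(1)}$, then the body of $G_F$, then the $2^{n/2}-1$ bodies of $G_U^{(2)}$, plus the singletons $\{u\}$ and $\{z\}$ at the ends — and every bridge joins its left neighbour's $\E$-side to its right neighbour's $\B$-side, so the matching walk advances monotonically, visiting $2^{n/2}+2$ consecutive pieces. As $2^{n/2}+1$ forward crossings must fit in the path, the walk can start no further right than the $2^{n/2}$-th piece, so its $2^{n/2}+2$ consecutive pieces are forced to include the $(2^{n/2}+1)$-th one — the body of $G_F$ — and that piece is an ``interior'' visited piece, the ones that each match a full block $\B P_{x_q}\E$ (the first and last visited pieces only match the leading $\E$ and the trailing $\B$ of $P$). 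Inside the $G_F$-body piece the walk therefore enters at $b$, matches $k$ symbols over $\{\C,\D\}$, which are one of the blocks $P_{x_i}$ since the maximal $\{\C,\D\}$-runs of $P$ are exactly the $P_{x_i}$ occurring in order, and exits at $e$ — a match of $\B P_{x_i}\E$ in $G_F$.

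The hard part is the ``no cheating'' step in the second direction: because a match in \pmlg may revisit nodes, one must exclude a walk that matches $P$ entirely inside the $G_U$ gadgets or that wiggles back and forth to bypass $G_F$. I expect the whole burden to rest on facts (a) and (b), which pin down the monotone, once-each use of the bridges; the remaining bookkeeping — a piece entered at its $\B$-node and required to emit $k$ letters of $\{\C,\D\}$ and then $\E$ must leave at its $\E$-node, traversing its layered body along a simple path — is routine but must be carried out carefully, since it legitimises the ``$2^{n/2}+2$ consecutive pieces'' count and the final identification with a block $P_{x_i}$.
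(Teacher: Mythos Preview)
Your proposal is correct and follows essentially the same approach as the paper: both directions rest on the observation that the $\E\B$ factors of $P$ force monotone, one-way traversals of the bridges, after which a pigeonhole count shows that one block $\B P_{x_i}\E$ must land in $G_F$. Your write-up is simply more explicit than the paper's --- you spell out the piece decomposition, the exact count $2^{n/2}+1$ of $\E\B$ factors, and the boundary role of $u$ and $z$ --- whereas the paper compresses the forward direction into the remark that each $G_U^{(l)}$ can absorb at most $2^{n/2}-1$ blocks while $P$ has $2^{n/2}$ of them.
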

\begin{proof}
For the $\Rightarrow$ implication, the bridges with endpoints labeled with \E\ and \B\ can only be traversed once in this direction, as $P$ contains the sequence \E\B\ but does not contain \B\E. Moreover, each occurrence of $P$ must begin with one such bridge and end with another such bridge. For this reason 
each distinct subpattern $\B P_{x_i}\E$ matches a path from either a distinct portion of $G_U^{(l)}$ ($l=1,2$)) or $G_F$. Recall that $G_U^{(1)}$ and $G_U^{(2)}$ can match at most $2^\frac n2 -1$ subpatterns of $P$ each, while $P$ has $2^\frac n2$ of them. Hence one subpattern $\B P_{x_i}\E$ is forced to have a match in $G_F$ in order to have a full match for $P$.

The $\Leftarrow$ implication is trivial. In fact, if $\B P_{x_i}\E$ has a match in $G_F$ then we can match $\B P_{x_1}\E, \ldots, \B P_{x_{i-1}}\E$ in $G_U^{(1)}$  and $\B P_{x_{i+1}}\E, \ldots, \B P_{x_{2^{\frac{n}{2}}}}\E$ in $G_U^{(2)}$ by construction, and have a full match for $P$ in $G$.
\end{proof}

The main result proves the correctness of our reduction.

\begin{theorem} \label{the:correctness}
Pattern $P$ matches in $G$ if and only if the \sat formula $F$ is satisfiable.
\end{theorem}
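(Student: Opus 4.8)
The plan is to obtain Theorem~\ref{the:correctness} by simply chaining the two equivalences already in hand, namely Lemma~\ref{lemma:patternsubpattern} and Lemma~\ref{lemma:matchingGF}, together with the elementary bijection between full truth assignments of $F$ and pairs $(x_i,y_j)\in X\times Y$. Concretely, any assignment $x=\langle b_1,\ldots,b_n\rangle$ of the $n$ variables of $F$ splits into the half-assignment $x_i=\langle b_1,\ldots,b_{n/2}\rangle\in X$ and the half-assignment $y_j=\langle b_{n/2+1},\ldots,b_n\rangle\in Y$ (here we use that $n$ is even, which was already assumed without loss of generality), with $x=x_iy_j$; conversely every such pair glues back into a full assignment. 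Hence ``$F$ is satisfiable'' is literally the statement ``$x_iy_j\models F$ for some $x_i\in X$ and $y_j\in Y$''.

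For the $(\Rightarrow)$ direction I would argue: assume $P$ matches in $G$. By Lemma~\ref{lemma:patternsubpattern} some subpattern $\B P_{x_i}\E$ matches in $G_F$, and then Lemma~\ref{lemma:matchingGF} supplies a $y_j\in Y$ with $x_iy_j\models F$, so $F$ is satisfiable. For the $(\Leftarrow)$ direction, let $x=x_iy_j$ be a satisfying assignment, decomposed as above. Then $x_iy_j\models F$, so Lemma~\ref{lemma:matchingGF} gives a match for $\B P_{x_i}\E$ in $G_F$, and Lemma~\ref{lemma:patternsubpattern} lifts it to a match for the whole pattern $P$ in $G$. Combining the two implications proves the theorem.

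I do not expect any genuine obstacle in this concluding step: all the substance has already been spent upstream, in the two structural facts on which the reduction rests. First, that the $\E\B$ bridges of $G$ can be traversed in one direction only and that, since $P$ contains $2^{\frac n2}$ subpatterns $\B P_{x_i}\E$ while each copy of $G_U$ can absorb at most $2^{\frac n2}-1$ of them in the forced left-to-right order $G_U^{(1)},G_F,G_U^{(2)}$, exactly one subpattern must be matched inside $G_F$ (Lemma~\ref{lemma:patternsubpattern}). Second, that a matching path inside $G_F$ cannot change its $j$ coordinate (Lemma~\ref{lemma:samej}), which is what makes Lemma~\ref{lemma:matchingGF} tie the match to a single half-assignment $y_j$. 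Granted these, the present theorem is essentially a two-line corollary, and the only bookkeeping needed here is the $X$--$Y$ split of a full assignment together with the evenness of $n$.
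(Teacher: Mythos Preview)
Your proposal is correct and follows essentially the same approach as the paper: both chain Lemma~\ref{lemma:patternsubpattern} and Lemma~\ref{lemma:matchingGF}, with the bijection between full assignments and pairs $(x_i,y_j)\in X\times Y$ as the only bookkeeping. The paper's proof is in fact more terse than yours, simply stating the two equivalences back to back.
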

\begin{proof}
By Lemma~\ref{lemma:patternsubpattern},
$P$ matches in $G$ if and only if a subpattern $P_{x_i}$ matches in $G_F$. By Lemma~\ref{lemma:matchingGF} this holds if and only if the truth assignment $x_iy_j$ satisfies $F$, hence $F$ is satisfiable.
\end{proof}

\subsection{Cost of the reduction}
\label{sub:reduction-cost}

We analyze the cost of building the pattern $P$ and the graph $G$ from the \sat formula $F$.

\begin{lemma}
\label{lemma:cost-reduction}
Given a \sat formula $F$ with $n$ variables, the corresponding pattern $P$ and graph $G$ can be built in $\Tilde{O}(2^\frac n2)$ time and space.
\end{lemma}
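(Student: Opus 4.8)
The plan is a direct size-and-cost accounting of the three ingredients $P$, $G_F$, and $G_U$ that are glued into $G$, carried out with the standing hypothesis $k = O(\poly(n))$ in force, so that every polynomial-in-$n$ overhead is absorbed into the $\Tilde{O}$ notation. First I would enumerate the $2^{\frac{n}{2}}$ half-assignments in $X$ (and likewise in $Y$) explicitly, representing each one as an array of $n/2$ bits indexed by variable index; producing this list costs $O(n\,2^{\frac{n}{2}}) = \Tilde{O}(2^{\frac{n}{2}})$ time and space. With this representation, testing whether $x_i \models c_h$ amounts to scanning the literals of $c_h$ and looking up the assigned value of each relevant variable, which takes time linear in the length of $c_h$, hence $O(n)$; the same holds for $y_j \models c_h$.

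Now I would bound and build $P$. By construction $P = \E\B P_{x_1}\E\B \cdots \B P_{x_{2^{\frac{n}{2}}}}\E\B$ consists of $2^{\frac{n}{2}}$ blocks of length $k$ plus $O(2^{\frac{n}{2}})$ separator symbols, so $m = |P| = O(k\,2^{\frac{n}{2}}) = \Tilde{O}(2^{\frac{n}{2}})$. Filling entry $P_{x_i}[h]$ requires one satisfaction test $x_i \models c_h$ and then writing the appropriate symbol (\C\ or \D) in $O(1)$, so the whole pattern is produced in $O(k\,n\,2^{\frac{n}{2}}) = \Tilde{O}(2^{\frac{n}{2}})$ time and $\Tilde{O}(2^{\frac{n}{2}})$ space. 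For the graph, $G_F$ has $|V_F| \le 2\cdot 2^{\frac{n}{2}}k + 2$ nodes and $|E_F| = O(2^{\frac{n}{2}}k)$ edges (each inner node has $O(1)$ incident edges along its $j$-line, while $b$ and $e$ have $O(2^{\frac{n}{2}})$ incident edges each, still $O(2^{\frac{n}{2}}k)$ in total); iterating over all pairs $(y_j,c_h) \in Y\times C$, deciding $y_j \models c_h$ in $O(n)$ time, and emitting node $c_{j,h}$ or $d_{j,h}$ with its $O(1)$ new edges yields $G_F$ in $O(k\,n\,2^{\frac{n}{2}}) = \Tilde{O}(2^{\frac{n}{2}})$ time and space. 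The gadget $G_U$ is simply $2^{\frac{n}{2}}-1$ copies of a fixed substructure on $2k+2$ nodes with $O(k)$ edges, chained by the single edges $(e_i,b_{i+1})$, so $|V_U|,|E_U| = O(k\,2^{\frac{n}{2}}) = \Tilde{O}(2^{\frac{n}{2}})$ and it is built within the same bound. Finally $G$ takes one $G_F$, two copies of $G_U$, two extra nodes $u,z$, and four extra edges; this last gluing step is $O(1)$ on top of the components, so $|V|,|E| = \Tilde{O}(2^{\frac{n}{2}})$ and the entire construction runs in $\Tilde{O}(2^{\frac{n}{2}})$ time and space.

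There is no genuine obstacle here, as the claim is a routine accounting; the one point that deserves care is keeping each individual satisfaction test polynomial in $n$ rather than, say, $\Theta(k)$ or worse. Storing each half-assignment as an array indexed by variable index makes each test $x_i \models c_h$ (and $y_j \models c_h$) cost $O(n)$, and then the hypothesis $k = O(\poly(n))$ ensures that the accumulated multiplicative factor $k\cdot\poly(n)$ is swallowed by $\Tilde{O}$, leaving the stated bound $\Tilde{O}(2^{\frac{n}{2}})$ for both time and space.
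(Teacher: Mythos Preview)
Your proposal is correct and follows essentially the same direct accounting argument as the paper: bound the size of each component ($P$, $G_F$, $G_U$, and the final assembly) by $O(k\cdot 2^{n/2})$ using that each satisfaction test costs $O(n)$, and absorb the $\poly(n)$ factors into $\Tilde{O}$. If anything, your write-up is slightly more explicit (enumerating the half-assignments, counting the four gluing edges rather than the paper's ``$2$ edges''), but the structure and the key observations are identical.
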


\begin{proof}
Checking if an assignment satisfies a clause takes $O(n)$ time which, for our goals, is negligible when compared to $\tilde{O}(2^\frac n2)$. 
Recalling that the number $k$ of clauses is polynomially bounded in $n$, we observe that each $P_{x_i}$ in $P$ has $k$ symbols that can be either \C\ or \D\ plus symbols \B\ and \E. Since $P$ has $2^\frac n2$ sub-patterns $P_{x_i}$, summing everything up we get a length of $m = (k+2) \, 2^\frac n2 = \Tilde{O}(2^\frac n2)$ symbols. As for $G_U$, it has $2^\frac n2$ gadgets each one having $k$ nodes labeled with \C, $k$ nodes labeled with \D, and nodes $b_i$ and $e_i$. Hence there are $(2 + 2k) \, 2^\frac n2 = \Tilde{O}(2^\frac n2)$ total nodes. Each node has a constant number of incident edges (at most $4$) thus their size is $\Tilde{O}(2^\frac n2)$ as well. As for $G_F$, it has $O(k \, 2^\frac n2)$ nodes labeled with \C\ and the same amount of nodes labeled with \D\, plus those with \B\ and \E. In this case, each node has a constant number of edges but for \B\ and \E. Nevertheless, \B\ and \E\ have $O(2^\frac n2)$ edges each, therefore the total amount of edges is again $\Tilde{O}(2^\frac n2)$. For connecting $G_F$ to the two instances of $G_U$ we are adding just $2$ edges. Since the pattern and the graph have size $\Tilde{O}(2^\frac n2)$, we conclude that the cost of our reduction is indeed $\Tilde{O}(2^\frac n2)$.
\end{proof}

\subsection{Implications on SETH}
\label{sub:SETH-implications}

The last step in our proof of Theorem~\ref{theorem:Emlowerbound}
is showing that any $O(|E|^{1-\epsilon} \, m)$-time or $O(|E| \, m^{1-\epsilon})$-time algorithm for \pmlg unavoidably leads to a failure of \seth. To this aim, assume that we have such an algorithm, say $A$. Given a \sat formula $F$ we perform our reduction stated in Theorem~\ref{the:correctness} obtaining pattern $P$ and graph $G$ in $\Tilde{O}(2^\frac n2)$ time by Lemma~\ref{lemma:cost-reduction}, observing that $|E| = \Tilde{O}(2^\frac n2)$ and $m = \Tilde{O}(2^\frac n2)$. At this point, no matter whether $A$ has $O(|E|^{1-\epsilon} m)$ or $O(|E| \, m^{1-\epsilon})$ time complexity, we will end up with an algorithm deciding if $F$ is satisfiable in $\Tilde{O}(2^\frac n2 \, 2^{\frac n2(1-\epsilon)}) = \Tilde{O}(2^{\frac{(2-\epsilon)}{2} n})$ time.
Since $\alpha=\frac{(2-\epsilon)}{2} < 1$ we conclude that this implies to be able to solve \sat in $O(2^{\alpha n})$ time with $\alpha < 1$, making \seth false.

\section{From undirected graphs to DAGs, with binary alphabets}
\label{sec:degree-alphabet-direction}

In this section we show that the graph $G$ obtained from the reduction described in Section~\ref{sec:SAT-graphs} can be transformed so that each node has degree at most three and label chosen from an alphabet of two symbols $\{\zero,\one\}$.

We describe how to modify the proof of Theorem~\ref{theorem:Emlowerbound} so that it holds for any graph of degree at least three. We observe that the graph built in the reduction in Section~\ref{sec:SAT-graphs} has degree $O(2^{\frac n2})$. To obtain degree at most three, we first modify gadgets $G_F$ and $G_U$ to meet such requirement, and then adjust pattern $P$ consequently. Finally, we obtain a binary alphabet for the labels, thus proving Corollary~\ref{cor:general-undirected}.

After that we prove Corollary~\ref{cor:general-dag}, showing that the undirected graph can be easily transformed into a directed acyclic graph (DAG).

\subsection{Maximum degree three}

\paragraph*{Revised gadget $G_F$}
As depicted in Figure \ref{figure:Degree3GFtransf}\textit{(a)}, consider the $O(2^{\frac n2})$ edges connecting node $b$ with nodes $c_{j,1}$ and $d_{j,1}$ in $G_F$. We replace them by a binary tree structure whose nodes are new dummy nodes $f_l$ with labels $L(f_l) = \texttt{d}$ for $1 \le l \le 2^{\frac n2} - 2$. As for node $e$, we proceed along the same way and replace the edges connecting nodes $c_{j,k}$ and $d_{j,k}$ to $e$ by a binary tree structure (this case is not shown in the figure). The internal nodes of these trees have degree at most three.

\begin{figure}[t]
\centering
\includegraphics[scale=0.35]{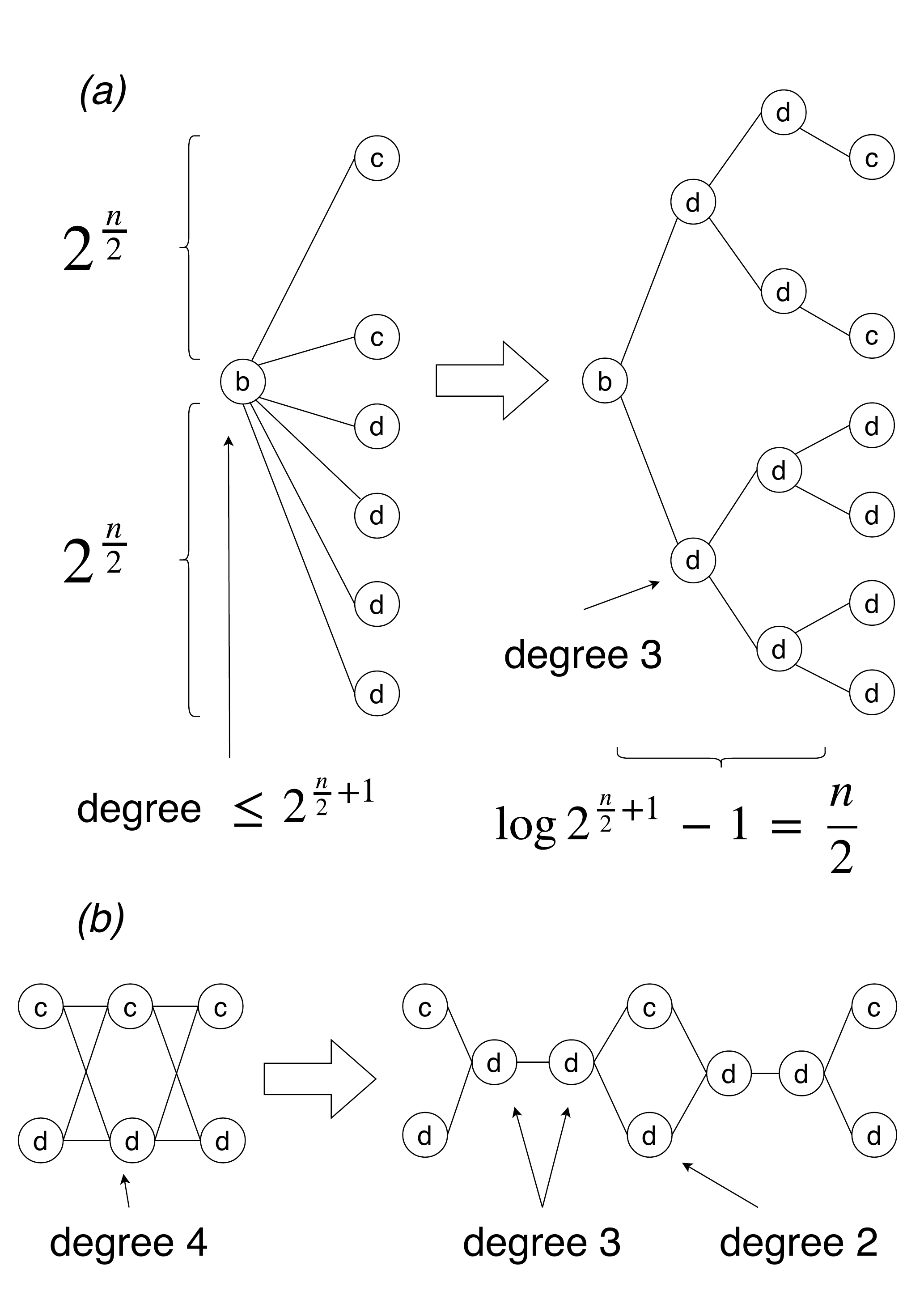}
\caption{The transformation of gadget $G_F$. \textit{(a)} The edges connecting node $b$ with nodes $c_{j,1}$ and $d_{j,1}$ are replaced by a binary tree structure. Note that, even if not reported in this figure, the same process is performed for the edges connecting nodes $c_{j,k}$ and $d_{j,k}$ with node $e$. \textit{(b)} When $c_{j,h-1}, c_{j,h}$ and $c_{j,h+1}$ exist, $c_{j,h}$ and $d_{j,h}$ have degree four. Hence pairs of dummy nodes are added to achieve degree three.}
\label{figure:Degree3GFtransf}
\end{figure}

This is not enough to guarantee degree at most three for each node in $G_F$ as nodes $c_{j,h}$ and $d_{j,h}$ could have degree four. For example, with some nodes $d_{j,h-1}, d_{j,h}$ and $d_{j,h+1}$, nodes $c_{j,h-1}, c_{j,h}$ and $c_{j,h+1}$ could exists. Then both $c_{j,h}$ and $d_{j,h}$ would have degree four. This can be fixed as shown in Figure \ref{figure:Degree3GFtransf}\textit{(b)}, adding two pairs of dummy nodes $f$ with label $L(f)=\D$ to lower the degree to three.\footnote{One pair is placed between nodes $c_{j,h-1}, d_{j,h-1}$ and nodes $c_{j,h}, d_{j,h}$ via edges $(c_{j,h-1},f_{j,h-1}^\textit{(1)}),(d_{j,h-1},f_{j,h-1}^\textit{(1)})$ and $(f_{j,h-1}^\textit{(2)},c_{j,h}),(f_{j,h-1}^\textit{(2)},d_{j,h})$. The other pair of dummy nodes $f$ is placed between nodes $c_{j,h}, d_{j,h}$ and nodes $c_{j,h+1}, d_{j,h+1}$ via edges $(c_{j,h},f_{j,h}^\textit{(1)}),(d_{j,h},f_{j,h}^\textit{(1)})$ and $(f_{j,h}^\textit{(2)},c_{j,h+1}),(f_{j,h}^\textit{(2)},d_{j,h+1})$.}

At this point, we added $O(2^{\frac n2})$ dummy nodes $f$ for the binary tree, and $O((k-1)2^\frac n2) = \Tilde{O}(2^\frac n2)$ pairs of nodes $f_{j,h}^\textit{(1)}, f_{j,h}^\textit{(2)}$. Moreover, the new edges for the binary tree are as many as the nodes while for the other modifications we add one edge for each pair of dummy nodes. The overall time complexity to build the transformed $G_F$ does not increase significantly.

\paragraph*{Revised gadget $G_U$}
Gadget $G_U$ has to be consistent with $G_F$. We add $(\log 2^{\frac n2 + 1}) - 1 = \frac n2$ dummy nodes $f$ with label $L(f)=\D$ between every $b_i$ node and the nodes $c_{i,1}$ and $d_{i,1}$  following it. We also add $\frac n2$ dummy nodes $f$ with label $L(f)=\D$ between every node $e_i$ and the previous nodes $c_{i,k}$ and $d_{i,k}$. We are adding $2\frac n2 (2^\frac n2-1) = \Tilde{O}(2^\frac n2)$ new nodes and one new edge per node, thus the overall time complexity will not be affected. The need for this step will be clearer when we will modify pattern $P$, as it has to match either $G_F$ or $G_U$, so the same format of $P$ is required in both types of gadgets.

We have another issue to handle. As in $G_F$, there could be nodes $c_{i,h}$ and $d_{i,h}$ of degree four. In that case, we add pairs of dummy nodes $f$ with label $L(f)=\D$ following the same schema presented for $G_F$ and illustrated in Figure \ref{figure:Degree3GFtransf}\textit{(b)}. In this way we are introducing $2(k-1)(2^\frac n2 -1) = \Tilde{O}(2^{\frac n2})$ new nodes and one edge for each pair of dummy nodes which do not change the time complexity of the reduction.

\paragraph*{Revised pattern $P$}
Pattern $P = \E\B P_{x_1}\texttt{eb}P_{x_2}\E \ldots \B P_{x_{2^{\frac{n}{2}}}}\E\B$ is modified so as to match $G_F$ and $G_U$ when needed. We add $\frac n2$ symbols $\D$ after each occurrence of $\B$ and before each occurrence of $\E$. Moreover, we insert $\D$ symbols inside the subpatterns $P_{x_i} = a_1 \, a_2 \ldots a_k$, where $a_h \in \{ \texttt{c,d}\}$, to obtain the new subpatterns $P'_{x_i} = a_1 \D \, \D \, a_2 \, \D \, \D \ldots \D \, \D \, a_k$. Therefore, the new pattern to match will be
\begin{equation*}
    P' = \E\B \underbrace{\D \ldots \D}_{{\frac n2} \mathtt{\ times}} P'_{x_1}\underbrace{\D \ldots \D}_{{\frac n2} \mathtt{\ times}} \E \ldots \B \underbrace{\D \ldots \D}_{{\frac n2} \mathtt{\ times}} P'_{x_{2^{\frac{n}{2}}}}\underbrace{\D \ldots \D}_{{\frac n2} \mathtt{\ times}}\E\B.
\end{equation*}

It is worth noting that $P' \in \E\B\, ( \{\C,\D\}^+\E\B)^+$ in this way. The number of new symbols added before and after the subpatterns is $\frac n2 2^\frac n2 = \Tilde{O}(2^\frac n2)$ while the ones inserted inside them are $2(k-1) 2^\frac n2 = \Tilde{O}(2^\frac n2)$. The time cost of the reduction does not increase significantly.

\subsection{Binary alphabet}

The last step consists in defining a binary encoding $\alpha$ of the symbols $\Sigma = \{ \B,\E,\C,\D \}$, namely,
\[
\alpha(\C) = \zero\zero\zero\zero, \quad
\alpha(\D) = \one\one\one\one, \quad
\alpha(\B) = \one\zero, \quad
\alpha(\E) = \zero\one.
\]
Given any string $x = x[1..m]$, we define its binary encoding $\alpha(x) = \alpha(x[1]) \cdots \alpha(x[m])$. The following useful synchronizing property holds, recalling that each edge connecting a node with label \E\ to a node with label \B\ is a bridge in (transformed) $G$. 

\begin{lemma}
\label{lemma:eb-property}
For any string $x \in \Sigma^+$, its binary encoding $\alpha(x)$ contains \zero\one\one\zero\ if and only if $x$ contains \E\B.
\end{lemma}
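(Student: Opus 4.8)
The plan is to prove Lemma~\ref{lemma:eb-property} by a direct case analysis on the structure of the encoding $\alpha(x)$, exploiting the fact that $\zero\zero\zero\zero$, $\one\one\one\one$, $\one\zero$, $\zero\one$ are the only blocks that can appear and that they are (variable-length) prefix-free, so that the block boundaries in $\alpha(x)$ are uniquely determined. First I would establish the ``if'' direction, which is immediate: if $x$ contains \E\B\ at positions $i,i+1$, then $\alpha(x)$ contains the substring $\alpha(\E)\alpha(\B) = \zero\one\one\zero$, so we are done.

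For the ``only if'' direction, suppose $\alpha(x)$ contains the substring $\zero\one\one\zero$, occurring at some position $p$. I would argue that this occurrence must straddle the boundary between two consecutive blocks $\alpha(x[j])\alpha(x[j+1])$, and then enumerate which pairs of symbols can produce a $\zero\one\one\zero$ spanning their concatenation. The key observations are: (i) no single block contains $\zero\one\one\zero$ (the length-4 blocks are $\zero\zero\zero\zero$ and $\one\one\one\one$, neither of which is $\zero\one\one\zero$, and the length-2 blocks are too short); hence $\zero\one\one\zero$ must cross at least one block boundary. (ii) Because all blocks have length $2$ or $4$ and $\zero\one\one\zero$ has length $4$, it can cross at most one boundary (if it crossed two boundaries it would have to fully contain a middle block, which would have to be a length-$2$ block $\one\zero$ or $\zero\one$, forcing $\zero\one\one\zero$ to be of the form $?\cdot\one\zero\cdot?$ or $?\cdot\zero\one\cdot?$ with single-bit pieces on each side coming from length-$\ge 2$ neighbours — I would check this cannot happen, or more cleanly argue by the length/parity of offsets that a length-$4$ window meeting three blocks is impossible when the outer blocks contribute at least one bit each). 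So the occurrence sits inside exactly two consecutive blocks $\alpha(a)\alpha(b)$ with $a = x[j]$, $b = x[j+1]$.

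Then I would do the finite case check over $a,b \in \{\B,\E,\C,\D\}$: write out $\alpha(a)\alpha(b)$ for the relevant combinations and ask when $\zero\one\one\zero$ appears as a factor of this concatenation that genuinely uses bits from both $\alpha(a)$ and $\alpha(b)$. Since $\alpha(a)$ ends in $\zero$ for $a \in \{\C,\B\}$ and ends in $\one$ for $a \in \{\D,\E\}$, and $\alpha(b)$ starts with $\zero$ for $b \in \{\C,\E\}$ and with $\one$ for $b \in \{\D,\B\}$, the middle ``$\one\one$'' of $\zero\one\one\zero$ at the boundary forces the last bit of $\alpha(a)$ to be $\one$ and the first bit of $\alpha(b)$ to be $\one$ when the boundary falls exactly in the middle — i.e. $a \in \{\D,\E\}$ and $b \in \{\D,\B\}$ — or the boundary falls off-center, in which case a length-$2$ block is forced and one checks the two or three sub-cases directly. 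Running through these, the only pair that yields $\zero\one\one\zero$ is $a = \E$ (giving $\zero\one$), $b = \B$ (giving $\one\zero$), whose concatenation is exactly $\zero\one\one\zero$; every other combination fails to contain it. Hence $x$ contains \E\B.

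The main obstacle I anticipate is being fully rigorous about the claim that the $\zero\one\one\zero$ window meets exactly two blocks and not three — this is the one place where the variable block lengths ($2$ versus $4$) could in principle allow a short block to be swallowed. I would handle it by noting that a length-$2$ block strictly interior to the window would have to be flanked on both sides by at least one bit of its neighbours (each of length $\ge 2$), so the window would read (one bit)(two bits)(one bit); comparing against $\zero\one\one\zero$ this forces the interior block to be $\one\one$, which is not one of $\{\one\zero,\zero\one\}$, a contradiction. With that settled, the remainder is a routine finite check, and the lemma follows.
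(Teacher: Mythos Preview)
Your proposal is correct and will go through, but it organizes the case analysis differently from the paper. The paper's proof is a one-line brute force: since $\zero\one\one\zero$ has length four and the codewords have length two or four, any occurrence of $\zero\one\one\zero$ in $\alpha(x)$ lies inside $\alpha(x[i..i+2])$ for some $i$, and likewise \E\B\ occurs in $x$ iff it occurs in some length-$3$ window; so one simply inspects all $64$ strings $\alpha(s_1 s_2 s_3)$ with $s_1,s_2,s_3 \in \Sigma$ and verifies the equivalence. Your route is more structural: you first argue that a $\zero\one\one\zero$ window cannot straddle three codewords (because the swallowed middle block would have to equal $\one\one$, which is not a codeword), reducing to the $16$ pairs $\alpha(a)\alpha(b)$, and then split by where the block boundary falls inside the window. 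This is a cleaner argument and avoids the brute enumeration of $64$ triples, at the cost of a little more bookkeeping about boundary positions.

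One small correction: in your off-center case (the $1{+}3$ or $3{+}1$ split) the claim that ``a length-$2$ block is forced'' is not right---the block contributing three bits must have length four, but the block contributing one bit can have either length. It does not matter, because the direct check you propose still kills these cases: for a $3{+}1$ split you need $\alpha(a)$ to end in $\zero\one\one$, and for a $1{+}3$ split you need $\alpha(b)$ to start with $\one\one\zero$, neither of which any codeword does. With that fix your proof is complete.
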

\begin{proof}
We observe that \E\ and \B\ are encoded by two bits each, while \C\ and \D\ are enconed by four bits each. Hence, \zero\one\one\zero\ can appear by concatenating the binary encoding of two or three symbols. On the other hand, \E\B\ occurs in $x$ if and only if it occurs in a substring of length 3 of $x$. Consequently, it suffices to check the claim by inspection of all the 64 substrings of $x$ of length 3, \C\C\C,~\dots, \E\E\E, and their encodings to see that the property holds. 
\end{proof}

Any walk matched by the revised pattern $P'$ crosses the bridges in the direction from \E\ to \B. 

\begin{lemma}
\label{lem:P-property}
For any pattern $P'$ obtained in the reduction, its binary encoding $\alpha(P')$ does not contain $\one\zero\zero\one = \alpha(\B\E)$.
\end{lemma}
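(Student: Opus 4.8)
The plan is to prove the contrapositive-style structural claim directly: show that $\alpha(P')$ never contains the four-bit block $\one\zero\zero\one$. First I would recall from the construction of the revised pattern that $P' \in \E\B\,(\{\C,\D\}^+\E\B)^+$, i.e. every occurrence of \B\ is immediately preceded by an \E, every \E\ is either immediately followed by \B\ or is the final symbol, and between consecutive \E\B-blocks there is a nonempty run of \C's and \D's (in fact each such run begins and ends with a symbol in $\{\C,\D\}$ and contains no \B\ and no \E). In particular $P'$ does not contain the substring \B\E. By Lemma~\ref{lemma:eb-property} applied in the "only if" direction (contrapositively), since $P'$ never contains \B\E\ — wait, Lemma~\ref{lemma:eb-property} is stated for \E\B, so the right move is to observe that the analogous statement for \B\E\ holds by the same case analysis, or simply to argue directly on the encodings.

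The key step is a short combinatorial argument about how the block $\one\zero\zero\one$ can straddle codeword boundaries. Since $\alpha(\B)=\one\zero$ and $\alpha(\E)=\zero\one$ are the only length-$2$ codewords and $\alpha(\C)=\zero\zero\zero\zero$, $\alpha(\D)=\one\one\one\one$ are the only length-$4$ codewords, any occurrence of the $4$-bit pattern $\one\zero\zero\one$ must lie within the concatenation of at most three consecutive codewords. So it suffices to enumerate, exactly as in the proof of Lemma~\ref{lemma:eb-property}, the $64$ length-$3$ symbol strings $s_1 s_2 s_3 \in \Sigma^3$ and check that $\one\zero\zero\one$ appears in $\alpha(s_1)\alpha(s_2)\alpha(s_3)$ only when $s_i s_{i+1} = \B\E$ for some $i$. (One checks, e.g., that $\one\zero\zero\one$ forces a $\one\zero$ followed by a $\zero\one$ with the $\one$ of the first and the $\zero$ of the second being codeword-terminal, which pins the boundary to be between an encoded \B\ and an encoded \E; the all-ones and all-zeros blocks of \D\ and \C\ can never supply the isolated $\one\zero\zero\one$ pattern, and $\alpha(\E)\alpha(\B)=\zero\one\one\zero$ is the reverse and does not match.) Combining this with the structural fact that $P'$ contains no \B\E\ yields that $\alpha(P')$ contains no $\one\zero\zero\one$.

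I would present it as: (1) state that $P'$ contains no occurrence of \B\E, citing the form $P'\in\E\B(\{\C,\D\}^+\E\B)^+$; (2) invoke the encoding-level lemma — either by citing the \B\E-analogue of Lemma~\ref{lemma:eb-property} or by the same $64$-case inspection — that $\alpha(x)$ contains $\one\zero\zero\one$ iff $x$ contains \B\E; (3) conclude. The main obstacle, such as it is, is purely bookkeeping: making sure the claim "$\one\zero\zero\one$ can only appear inside three consecutive codewords" is justified (it follows because the longest codeword has length $4$, so a $4$-bit window overlaps at most three codewords), and that the case inspection is genuinely exhaustive rather than hand-waved. There is no real mathematical difficulty — the subtlety is only that one must treat the boundary-straddling occurrences, not just occurrences aligned to codeword boundaries, which is exactly why the $3$-symbol window (rather than a $2$-symbol window) is needed.
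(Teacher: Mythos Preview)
Your approach is correct and essentially the same as the paper's: both observe that $P' \in \E\B(\{\C,\D\}^+\E\B)^+$ contains no occurrence of \B\E\ and then reduce the encoding question to inspecting length-$3$ symbol windows. The only difference is cosmetic---the paper restricts the inspection to the handful of $3$-symbol forms that can actually appear in $P'$ (namely $\B\{\C,\D\}\E$, $\E\B\{\C,\D\}$, $\{\C,\D\}\E\B$, $\B\{\C,\D\}^2$, $\{\C,\D\}^2\E$, $\{\C,\D\}^3$) and checks directly that none of their encodings contains \one\zero\zero\one, whereas you propose the full $64$-case check to obtain a general \B\E-analogue of Lemma~\ref{lemma:eb-property}; one small correction to your bookkeeping: the reason a $4$-bit window meets at most three consecutive codewords is that the \emph{shortest} codeword has length~$2$, not that the longest has length~$4$.
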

\begin{proof}
Recalling that $P' \in \E\B\, ( \{\C,\D\}^+\E\B)^+$, all the possible substrings of length 3 in $P'$ by construction are of the forms $\B \{\C,\D\}\E$, $\E\B\ \{\C,\D\}$, $\{\C,\D\}\E\B$, $\B \{\C,\D\}^2$, $\{\C,\D\}^2\E$, and $\{\C,\D\}^3$. By inspection of this small number of cases, none contains \B\E, and none of their binary encodings contains \one\zero\zero\one.
\end{proof}

An immediate consequence of Lemma~\ref{lemma:eb-property} and~\ref{lem:P-property} is that the encodings preserve the occurrences. Let $G'$ be the transformed graph, and $P'$ be the revised pattern in the reduction. Let $\alpha(G')$ denote the graph obtained from $G'$ by relabeling its nodes with the binary encoding $\alpha$ of their labels.

\begin{lemma}
\label{lemma:preserving-match}
In the reduction, $P'$ matches in $G'$ if and only if $\alpha(P')$ matches in $\alpha(G')$.
\end{lemma}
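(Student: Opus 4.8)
The plan is to prove the equivalence $P'$ matches in $G'$ $\iff$ $\alpha(P')$ matches in $\alpha(G')$ by arguing the two directions separately, using the structural lemmas already established.

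For the forward direction, suppose $P'$ matches in $G'$ along a path $v_1,\dots,v_j$ with offsets $l,l'$. Then $\alpha$ of the spelled string equals $\alpha(P')$ by definition of $\alpha$ acting homomorphically on concatenation, so $\alpha(P')$ trivially matches in $\alpha(G')$ along the same sequence of nodes (taking the induced bit-offsets $2l-1$ and $2l'$ or $4l-3$ etc., since every node label has fixed length under $\alpha$). This direction needs essentially no work beyond observing that $\alpha$ is length-determined on each symbol class and that the offsets can be translated.

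The nontrivial direction is the converse. Suppose $\alpha(P')$ matches in $\alpha(G')$ along a path $v_1,\dots,v_j$. The key worry is a \emph{framing mismatch}: the bit-string $\alpha(P')$ might be aligned against the bit-strings $\alpha(L(v_i))$ in a way that does not respect symbol boundaries, so that the path does not directly lift to a match of $P'$ in $G'$. Here I would use two facts. First, in $G'$ every node label is a single symbol of $\Sigma$, so $\alpha(L(v_i)) \in \{\zero\zero\zero\zero,\one\one\one\one,\one\zero,\zero\one\}$; hence the concatenation along the path is a sequence of blocks of length $2$ or $4$. Second, Lemma~\ref{lem:P-property} says $\alpha(P')$ avoids $\one\zero\zero\one$, and more importantly $P'$ lies in $\E\B(\{\C,\D\}^+\E\B)^+$, so $\alpha(P')$ is a concatenation of the four codewords with a very restricted structure (in particular the only $2$-bit codewords $\one\zero$ and $\zero\one$ occur only as the adjacent pair $\zero\one\,\one\zero$ coming from \E\B). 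The argument is that the decoding of a bit-string into these four codewords is essentially forced once one knows it arises as $\alpha$ of a string in that regular language: I would show that any occurrence of $\alpha(P')$ inside the block structure of $\alpha(G')$ must start at a block boundary (the leading $\zero\one$ of $\alpha(\E)$ cannot be absorbed mid-block because $\zero\zero\zero\zero$ and $\one\one\one\one$ contain no $\zero\one$ as a ``shiftable'' substring in a way consistent with what follows, and $\one\zero,\zero\one$ are whole blocks) and then, block by block, that the codeword consumed matches one node label. The synchronizing Lemma~\ref{lemma:eb-property} is what keeps the \E\B pairs aligned: the substring \zero\one\one\zero\ in $\alpha(P')$ corresponds exactly to the \E\B occurrences, which in $G'$ sit at bridges, so the path is forced to cross those bridges at exactly those points. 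Combining, the path $v_1,\dots,v_j$ partitions into consecutive groups whose concatenated labels spell $P'$, i.e.\ $P'$ matches in $G'$.

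The main obstacle I anticipate is making the ``decoding is forced / occurrences start at block boundaries'' step fully rigorous: one must rule out every way the $2$-bit codewords $\one\zero$ and $\zero\one$ could straddle a $4$-bit block or split it, and this is exactly a finite case analysis on short windows, analogous to the $64$-substring inspection in Lemma~\ref{lemma:eb-property}. I would organize it by first proving a ``unique parsing'' sublemma: any bit-string that is simultaneously (i) a concatenation of blocks from $\{\zero\zero\zero\zero,\one\one\one\one,\one\zero,\zero\one\}$ and (ii) equal to $\alpha(w)$ for some $w\in\Sigma^+$ with $w\in\E\B(\{\C,\D\}^+\E\B)^+$ decomposes into blocks in a unique way, and that decomposition is $\alpha(w[1]),\dots,\alpha(w[|w|])$ on the one side and $\alpha(L(v_1)),\dots,\alpha(L(v_j))$ on the other — forcing $j=|w|$ and $L(v_i)=w[i]$. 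Given that sublemma, the lemma follows immediately; the offsets $l,l'$ on the $P'$ side translate back from the bit-offsets because the first and last codewords consumed are $\alpha(\E)$ and $\alpha(\B)$ of fixed length $2$ (indeed, in this reduction the patterns begin and end with full symbols, so one may take $l=l'$ trivial, but the general translation still goes through since $\alpha$ is length-regular on each symbol).
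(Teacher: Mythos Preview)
Your plan is sound and, in fact, considerably more careful than the paper's own proof, which is a single sentence: it simply invokes Lemma~\ref{lemma:eb-property}, Lemma~\ref{lem:P-property}, and the observation that the \E--\B\ edges are bridges traversed only in the \E$\to$\B\ direction. The paper does not spell out the framing/alignment argument at all; you are right that this is where the actual work lies, and your proposal to handle it via a unique-parsing sublemma supported by a finite case analysis is a correct way to close that gap.

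One simplification you may have overlooked: the code $\{\alpha(\C),\alpha(\D),\alpha(\B),\alpha(\E)\}=\{\zero\zero\zero\zero,\one\one\one\one,\one\zero,\zero\one\}$ is a \emph{prefix code} (no codeword is a prefix of another), and by symmetry also a suffix code. Hence once you establish by your boundary case analysis that the occurrence of $\alpha(P')$ in the walk's bit-string begins at a codeword boundary (your analysis of the leading $\zero\one\one\zero$ does exactly this), unique decodability of prefix codes gives the entire ``unique parsing'' sublemma for free, with no further casework needed in the interior. This collapses most of the finite inspection you were anticipating. The roles of Lemma~\ref{lemma:eb-property} and Lemma~\ref{lem:P-property} are then really just to pin down the \emph{start} of the alignment (via the forced $\zero\one\one\zero$ at the head of $\alpha(P')$); the bridge remark in the paper is not strictly needed for this particular lemma, though it matters elsewhere in the reduction.
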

\begin{proof}
It follows by Lemma~\ref{lemma:eb-property} and~\ref{lem:P-property}, and the fact that all the edges whose endpoints have one label \E\ and the other label \B\ are bridges, and they are traversed in the direction from \E\ to \B\ when matching $P'$.
\end{proof}

In the encoding above, each node stores two or four bits. By replacing it with a chain of two or four nodes with a single bit as a label, we obtain the proof of Corollary~\ref{cor:general-undirected}.

\subsection{Directed acyclic graphs}
\label{sub:directed}

In order to prove Corollary~\ref{cor:general-dag}, we observe that the proof of Theorem \ref{theorem:Emlowerbound} can be easily modified in order to work also for DAGs.

Considering the definitions of edges $E_F$ and $E_U$ in the proof of Theorem \ref{theorem:Emlowerbound}, and the transformation described so far,  we immediately obtain a directed graph $G'$ that is acyclic. Indeed, because of bridges and occurrences of \E\B\ in the pattern, each pattern match must begin with some bridge, end with a different bridge and lay along a path from the first to the last bridge in the graph. So the edges can be oriented by construction from left (first bridge) to right (last bridge), as it can be checked in Fig.~\ref{figure:G_Fgadget}--\ref{figure:Degree3GFtransf}.

\section{Conclusions\label{sect:conclusions}}

We studied the complexity of pattern matching on labeled graphs, giving a \seth conditional quadratic lower bound for the exact pattern matching. In strings the exact pattern matching takes linear time whereas the approximate pattern matching takes quadratic time under a matching  conditional lower bound. Differently from strings, our result along with the upper bounds in~\cite{AmirLL00,rautiainen2017aligning} imply that the exact and approximate pattern matching (the latter with errors in the pattern) have the same complexity under the \seth conjecture.
Our conditional lower bound uses a binary alphabet and holds even if restricted to nodes of maximum degree at most three for undirected graphs, and to nodes of maximum sum of indegree and outdegree at most three for directed acyclic graphs (DAGs). 

Two border cases are left if the maximum degree or sum of indegree and outdegree is at most two: a) when the undirected graph is a simple path or a cycle, and pattern matching goes along a walk (so it is a sort of zig-zag string matching), and b) when the graph is a directed cycle. For a), we can convert each edge into a pair of arcs and apply the known quadratic algorithm in~\cite{AmirLL00}. On the other hand, we can extend our reduction to derive a matching \seth lower bound for this case \cite{EGMT19}. For b), we can adapt any known string matching algorithm (e.g.~\cite{KMP77}) to get linear time. 

An interesting and natural question for directed graphs is what happens when the graph is deterministic, that is, for each symbol $c$ and each node $v$, there is at most one neighbor of $v$ labeled with $c$. Unfortunately, this does not make the problem any easier. Although our reduction creates an inherently non-deterministic graph, it is possible to alter the reduction scheme to create a deterministic graph \cite{EGMT19}. 

\paragraph*{Acknowledgements}

The first two authors are grateful to Alessio Conte and Luca Versari for providing their comments on the reduction. The last author wishes to thank the participants of the annual research group retreat on sparking the idea to study \seth reductions in this context. We thank the anonymous reviewers of an earlier version of this paper for useful suggestions for improving the presentation and for pointing out the connection to regular expression matching.

\end{document}